\DeclarePairedDelimiter\floor{\lfloor}{\rfloor}
\DeclareMathOperator\Sinh{sinh}
\newcommand{\cV}{\mathcal{V}}
\newcommand{\A}{\mathcal{A}}
\newcommand{\Z}{\mathbb{Z}}
\newcommand{\C}{\mathbb{C}}
\newcommand{\cE}{\mathcal{E}}
\newcommand{\br}[1]{\left( #1 \right) }
\newcommand{\bs}[1]{\left\{ #1 \right\} }
\newcommand{\ccor}[1]{\left\langle #1 \right\rangle^\circ }
\newcommand{\cor}[1]{\left\langle #1 \right\rangle }
\newcommand{\deeta}{\delta_{\eta,0}}
\newcommand{\mathpic}[1]{\ensuremath{\vcenter{\hbox{\begin{tikzpicture} #1 \end{tikzpicture}}}}}
\newcommand{\res}{\mathop{\mathrm{Res}}}
\newcommand{\lb}{\left (}
\newcommand{\rb}{\right )}
\theoremstyle{plain}                    
\newtheorem{teo}{Theorem}[section]      
\newtheorem{prop}[teo]{Proposition}    
\newtheorem{lem}[teo]{Lemma}            
\theoremstyle{definition}               
\newtheorem{defin}{Definition}        
\theoremstyle{remark}                   
\newtheorem{rmk}{Remark}           
\newtheorem{notat}{Notation}     
\newcommand\mS{\mathcal{S}}
\newcommand\phm[2]{\lb #1 \rb_{#2}}
\newcommand\defaultcolor[0]{black}
\newcommand\youngTableauCell[4]{ 
  \draw[thick] (#1,#2) rectangle +(#4,#4);
  \node[\defaultcolor] at ($(#1,#2) + 0.5*(#4,#4)$) {#3};
}
\newenvironment{youngTableau}[1][]{
  \tikzset {
    cell size/.store in=\cellSize,
    begin x/.store in=\beginX,
    begin y/.store in=\beginY,
  }
  \def\advanceX{\pgfmathsetmacro{\X}{\X + \cellSize + 0.1}}
  \def\advanceY{\pgfmathsetmacro{\Y}{\Y - \cellSize - 0.1}}
  \def\resetX{\pgfmathsetmacro{\X}{\beginX}}
  \def\resetY{\pgfmathsetmacro{\Y}{\beginY}}

  \newcommand\cell[1]{ 
    \youngTableauCell{\X}{\Y}{##1}{\cellSize}
    \advanceX
  }

  \newcommand\RCell[1]{
    \begin{scope}[red]
      \cell{##1}
    \end{scope}
    \advanceX 
  }

  \newcommand\BCell[1]{
    \begin{scope}[blue]
      \cell{##1}
    \end{scope}
    \advanceX 
  }

  \newcommand\WCell[1]{
    \begin{scope}[white]
      \cell{##1}
    \end{scope}
    \advanceX 
  }

  \def\break{
    \resetX \advanceY
  }

  \tikzset{
    cell size=1,
    begin x=0,
    begin y =0,
    #1
  }
  \resetX \resetY
}{}
\numberwithin{equation}{section}
\begin{document}

\title[Polynomiality of orbifold Hurwitz numbers]
      {Polynomiality of orbifold Hurwitz numbers, spectral curve, and a new proof of the Johnson-Pandharipande-Tseng formula}

\author[P.~Dunin-Barkowski]{P.~Dunin-Barkowski}
\address{P.~D.-B.: Korteweg-de~Vries Institute for Mathematics, University of Amsterdam, P.~O.~Box 94248, 1090 GE Amsterdam, The Netherlands; ITEP, Moscow, Russia; and Laboratory of Mathematical Physics, National Research University Higher School of Economics, Moscow, Russia}
\email{P.Dunin-Barkovskiy@uva.nl}


\author[D.~Lewanski]{D.~Lewanski}
\address{D.~L.: Korteweg-de Vries Institute for Mathematics, University of Amsterdam, Postbus 94248, 1090 GE Amsterdam, The Netherlands}
\email{D.Lewanski@uva.nl}

\author[A.~Popolitov]{A.~Popolitov}
\address{A.~P.: Korteweg-de Vries Institute for Mathematics, University of Amsterdam, Postbus 94248, 1090 GE Amsterdam, The Netherlands and ITEP, Moscow, Russia}
\email{A.Popolitov@uva.nl}

\author[S.~Shadrin]{S.~Shadrin}
\address{S.~S.: Korteweg-de Vries Institute for Mathematics, University of Amsterdam, Postbus 94248, 1090 GE Amsterdam, The Netherlands}
\email{S.Shadrin@uva.nl}

\begin{abstract}
In this paper we present an example of a derivation of an ELSV-type formula using the methods of topological recursion. Namely, for orbifold Hurwitz numbers we give a new proof of the spectral curve topological recursion, in the sense of Chekhov, Eynard, and Orantin, where the main new step compared to the existing proofs is a direct combinatorial proof of their quasi-polynomiality. Spectral curve topological recursion leads to a formula for the orbifold Hurwitz numbers in terms of the intersection theory of the moduli space of curves, which, in this case, appears to coincide with a special case of the Johnson-Pandharipande-Tseng formula.
\end{abstract}

\maketitle

\tableofcontents

\section{Introduction}

\subsection{Main goal}
The main goal of this paper is to present a new important application of the procedure that allows to relate in a uniform way a class of combinatorial problems to the intersection theory of the moduli space of curves. First, let us describe this procedure. The logic behind it is the following one:
\begin{itemize}
	\item[--] We start with a combinatorial problem that depends in a natural way on a genus parameter $g\geq 0$ and a vector $\vec{\mu}\in\mathbb{Z}^{n}_{>0}$. 
	\item[--] We consider the generating functions that solve this combinatorial problem. Quite often we can prove that they can be considered as an expansion of certain symmetric differentials $\omega_{g,n}$ that solve the matrix model topological recursion~\cite{EO,Ey} for a particular spectral curve data. 
	\item[--] Under some mild assumptions, the expansion of the symmetric differentials obtained via the topological recursion can be represented (up to some constants) as
	$$
	\sum_{l(\vec{\mu})=n} \sum_{a_1,\dots,a_n=1}^r
	\left[ \int_{\overline{\mathcal{M}}_{g,n}} \frac{S(a_1,\dots,a_n)}{\prod_{j=1}^n (1-\psi_j \frac{d}{dx_j})} \right] \prod_{j=1}^n \xi_{a_j}(x_j).
	$$
	Here $r$ is the number of branching points on the spectral curve, $S(a_1,\dots,a_n)$ is a certain explicitly described tautological class on the moduli space of curves, and $\xi_a(x)$ are some auxiliary functions, $a=1,\dots,r$, also explicitly described~\cite{DOSS12,EYNARD}. 
	\item[--] This way we solve the original combinatorial problem in terms of the intersection numbers of the tautological classes on the moduli space of curves, and the formula that we get is of ELSV-type~\cite{ELSV}. 
\end{itemize}

The first instance of this way to derive an ELSV-type formula was presented in~\cite{DKOSS}, where this leads to a new proof of the original ELSV formula for ordinary Hurwitz numbers.

In this paper we perform this whole procedure for the so-called orbifold Hurwitz numbers~\cite{JPT,J09,DLN,BHLM}. The orbifold Hurwitz numbers are a special case of double Hurwitz numbers~\cite{Joh10}, where the ramification indices in one special fiber are given by an arbitrary partition $\mu$, and in the other special fiber they are all equal to $r$. The intersection formula that we obtain via this procedure was previously derived by Johnson, Pandharipande, and Tseng~\cite{JPT}, and this way we get a new proof of it. 

\subsection{The known facts about orbifold Hurwitz numbers}

Let us collect here the known facts about orbifold Hurwitz numbers so that we can summarize all relevant previous papers about them.
\begin{enumerate}
\item[Fact 1:] \emph{(JPT Formula)} The orbifold Hurwitz numbers are given by the intersection numbers on the moduli space of curves via the Johnson-Pandharipande-Tseng formula.
\item[Fact 2:] \emph{(Quasi-Polynomiality)} The orbifold Hurwitz numbers can be represented, up to a particular combinatorial factor, as the values of a polynomial in $n$ variables $\mu_1,\dots,\mu_n$, whose coefficients depend only on $\vec{\mu}\mod r$.
\item[Fact 3:] \emph{(Cut-and-Join)} The orbifold Hurwitz numbers satisfy a simple recursion with a clear topological meaning, which is called the cut-and-join equation~\cite{GJV}.
\item[Fact 4:] \emph{(Topological Recursion)} The $n$-point generating functions of orbifold Hurwitz numbers can be represented as expansions of the correlation differentials obtained via the Chekhov-Eynard-Orantin topological recursion procedure. 
\end{enumerate}

Let us explain what was known before. First of all, we have the Johnson-Pandharipande-Tseng result itself \cite{JPT}:
\begin{align*}
& \text{(Definition)} \Rightarrow \text{(JPT formula)} \\
\intertext{The main results of~\cite{BHLM} and~\cite{DLN} can be described as follows:}
& \text{(JPT formula)} \Rightarrow \text{(Quasi-Polynomiality)}
\\
& \text{(Quasi-Polynomiality) AND (Cut-and-Join)} \Rightarrow 
\text{(Topological Recursion)}.
\end{align*}
Here the first implication is obvious; though, until now, there was no other proof of quasi-polynomiality than its derivation from the structure of the Johnson-Pandharipande-Tseng formula. So, we see that the JPT formula is used in a very weak way in these papers; only its general structure appears to be relevant.

In~\cite{LPSZ} the full power of the JPT formula is employed; as a result it is proved there that 
$$
\text{(JPT formula)} \Leftrightarrow \text{(Topological Recursion)}
$$

In the present paper, we first give a direct proof of the quasi-polynomiality just from the definition of orbifold Hurwitz numbers. This allows us to use the results of~\cite{BHLM,DLN} in order to prove the topological recursion. This allows us to use the result of~\cite{LPSZ} in order to prove, in a new way, the Johnson-Pandharipande-Tseng formula. So, the structure of this paper can be summarized as follows:
\begin{align*}
& \text{(Definition)}\; \mathop{\Longrightarrow}^{\text{[this paper]}}\; \text{(Quasi-Polynomiality)}
\\
& \text{(Quasi-Polynomiality) AND (Cut-and-Join)}\; \mathop{\Longrightarrow}^{\text{following \cite{BHLM,DLN}}}\;
\text{(Topological Recursion)}
\\
& \text{(Topological Recursion)} \; \mathop{\Longrightarrow}^{\text{using \cite{LPSZ}}}\; \text{(JPT formula)} 
\end{align*}
The first step here is original and it is the main technical result of the present paper; in the second step we follow~\cite{BHLM,DLN}, though we try to focus more on the main structure of the formulas that represent the abstract loop equations rather than on explicit computations; in the third step we just use the results of~\cite{LPSZ}.

\subsection{Organization of the paper} In Section 2 we introduce our basic technical tool --- the semi-infinite wedge formalism. In Section 3 we develop further this formalism, in particular, we use it to define the orbifold Hurwitz numbers, and we represent them, in particular, using the so-called $\mathcal{A}$-operators. 
In Section 4 we analyze further the formula for orbifold Hurwitz numbers in terms of $\mathcal{A}$-operators in order to prove their quasi-polynomiality. In Section 5 we recall the basic setup of the topological recursion. In Section 6 we show how one can use the quasi-polynomiality and the cut-and-join equation for orbifold Hurwitz numbers in order to prove the topological recursion. In Section 7 we use the result of~\cite{LPSZ} to prove the Johnson-Pandharipande-Tseng formula in a new way.

Throughout this paper we fix integer $r\geq 1$.

\subsection*{Acknowledgements}

P.~D.-B., D.~L., A.~P., and S.~S. were supported by the Netherlands Organization for Scientific Research (NWO). P.~D.-B. and A.~P. were also partially supported by the Russian President's Grant of Support for the Scientific Schools NSh-3349.2012.2; P.~D.-B. was partially supported by RFBR grants 13-02-00478 and 14-01-31395-mol\_a, RFBR-India grant 14-01-92691-Ind\_a and RFBR-Turkey grant 13-02-91371-St\_a; A.~P. was partially supported by RFBR grants 13-02-00457 and 14-01-31492-mol\_a. P.~D.-B. was also partially supported by the Government of the Russian Federation within the framework of a subsidy granted to the National Research University Higher School of Economics for the implementation of the Global Competitiveness Program.



\section{Semi-infinite wedge formalism}\label{WEDGE}
In this section we introduce the semi-infinite wedge formalism. This allows us in the Section \ref{OPERATORS} to express $r$-orbifold Hurwitz numbers in terms of vacuum expectation of operators acting on the semi-infinite wedge space.
For a more complete introduction see e.g.~\cite{MiwaJimboDate,OP06,Joh10}.

Let $V$ be an infinite dimensional vector space with a basis labeled by half-integers. Denote the basis vector labeled by $m/2$ by $\underline{m/2}$, so $V = \bigoplus_{i \in \Z + \frac{1}{2}} \underline{i}$.

\begin{defin}\label{DEFSEMIINF}
The semi-infinite wedge space $\bigwedge^{\frac{\infty}{2}}(V) = \mathcal{V}$ is defined to be the span of all of the semi-infinite wedge products of the form
\begin{equation}\label{wedgeProduct}
\underline{i_1} \wedge \underline{i_2} \wedge \cdots
\end{equation}
for any decreasing sequence of half-integers $(i_k)$ such that there is an integer $c$ with $i_k + k - \frac{1}{2} = c$ for $k$ sufficiently large. The constant $c$ is called the \textit{charge}. We give $\mathcal{V}$ an inner product $(\cdot,\cdot)$ declaring its basis elements to be orthonormal.
\end{defin}

\begin{rmk}
By Definition \ref{DEFSEMIINF} the charge-zero subspace $\mathcal{V}_0$ of $\mathcal{V}$ is spanned by semi-infinite wedge products of the form 
$$
\underline{\lambda_1 - \frac{1}{2}} \wedge \underline{\lambda_2 - \frac{3}{2}} \wedge \cdots
$$
for some integer partition~$\lambda$. Hence we can identify integer partitions with the basis of this space:
\begin{equation*}
\mathcal{V}_0 = \bigoplus_{n \in \mathbb{N} } \bigoplus_{\lambda\,  \vdash\, n} v_{\lambda}
\end{equation*}
\end{rmk}

The empty partition $\emptyset$ plays a special role.
We call $$v_{\emptyset} = \underline{-\frac{1}{2}} \wedge \underline{-\frac{3}{2}} \wedge \cdots$$ the vacuum vector and we denote it by $|0\rangle$. Similarly we call the covacuum vector its dual with respect to the scalar product $(\cdot,\cdot)$ and we denote it by $\langle0|$.

\begin{defin}
 The \emph{vacuum expectation value} or \emph{disconnected correlator} $\cor{\mathcal{P}}^{\bullet}$ of an operator~$\mathcal{P}$ acting on $\mathcal{V}_0$: 
is defined to be:
\begin{equation*}
\cor{\mathcal{P}}^{\bullet} : = (|0\rangle , \mathcal{P} |0\rangle) =: \langle 0 |\mathcal{P}|0 \rangle
\end{equation*}
We also define 
\begin{equation}
\zeta(z)=e^{z/2} - e^{-z/2} = 2 \Sinh(z/2)
\end{equation}
\end{defin}

\begin{defin} This is the list of operators we will use:
\begin{enumerate}
\item[i)] For $k$ half-integer the operator
$\psi_k \colon (\underline{i_1} \wedge \underline{i_2} \wedge \cdots) \ \mapsto \ (\underline{k} \wedge \underline{i_1} \wedge \underline{i_2} \wedge \cdots)$
increases the charge by $1$. Its adjoint operator $\psi_k^*$ with respect to~$(\cdot,\cdot)$
 decreases the charge by $1$.
 \item[ii)] The normally ordered products of $\psi$-operators
\begin{equation}
E_{i,j} := \begin{cases}\psi_i \psi_j^*, & \text{ if } j > 0 \\
-\psi_j^* \psi_i & \text{ if } j < 0\ .\end{cases} 
\end{equation}
preserve the charge and hence can be restricted to $\cV_0$ with the following action. For $i\neq j$ $E_{i,j}$ checks if $v_\lambda$ contains $\underline{j}$ as a wedge factor and if so replaces it by $\underline{i}$. Otherwise it yields~$0$. In the case $i=j > 0$, we have $E_{i,j}(v_\lambda) = v_\lambda$ if $v_\lambda$ contains $\underline{j}$ and $0$ if it does not; in the case  $i=j < 0$, we have $E_{i,j}(v_\lambda) = - v_\lambda$ if $v_\lambda$ does not contain $\underline{j}$ and $0$ if it does.
 \item[iii)] The diagonal operators are assembled into the operators
\begin{equation}
\mathcal{F}_n := \sum_{k\in\Z+\frac12} \frac{k^n}{n!} E_{k,k} 
\end{equation}
We will be particularly interested in $\mathcal{F}_2$. The operator $\mathcal{F}_0$ is called \emph{charge operator}, while $\mathcal{F}_1$ is called \emph{energy operator}. Note that $\mathcal{F}_0$ identically vanishes on $\cV_0$, while $\mathcal{F}_1$ has the basis vectors $v_{\lambda}$ as its eigenvectors, with eigenvalues being $|v_\lambda|$ (we refer to $|v_\lambda|$ as the \emph{energy} of basis vector $v_\lambda$). We also say that operator $\mathcal{P}$, defined on $\cV_0$, is an operator of energy $c\in\mathbb{Z}$ if $-[\mathcal{F}_1,\mathcal{P}]$ is proportional to $\mathcal{P}$ with $c$ being the coefficient of proportionality, i.e. if
\begin{equation}
-[\mathcal{F}_1,\mathcal{P}] = c\, \mathcal{P}
\end{equation}
In other words, if $\mathcal{P}$ is an operator of energy $c$, then it maps a basis element of energy $k$ into a combination of basis elements that all have energies $k-c$.

It will be important to us that operators with positive energy annihilate the vacuum while negative energy operators are annihilated by the covacuum, explicitly: let $\mathcal{M}$ be any operator, let $\mathcal{P}$ have positive energy and $\mathcal{N}$ have negative energy, then $\langle \mathcal{MP}\rangle^{\bullet} = 0$ and $\langle \mathcal{NM}\rangle^{\bullet} = 0$. The operator $E_{i,j}$ has energy $j-i$, hence all the $\mathcal{F}_n$'s have zero energy.
\item[iv)] For $n$ any integer and $z$ a formal variable one has the energy $n$ operators:
\begin{equation}
\mathcal{E}_n(z) = \sum_{k \in \Z + 1/2} e^{z(k - \frac{n}{2})} E_{k-n,k} + \frac{\delta_{n,0}}{\zeta(z)}  .
\end{equation}
\item[v)] For $n$ any nonzero integer one has the energy $n$ operators:
\begin{equation}
\alpha_n = \mathcal{E}_n(0) = \sum_{k \in \Z + 1/2} E_{k-n,k}
\end{equation}
\end{enumerate}
\end{defin}

The commutation formula for $\mathcal{E}$ operators is:
\begin{equation}\label{COMME}
  \left[\cE_a(z),\cE_b(w)\right] =
\zeta\left(\det  \left[
\begin{smallmatrix}
  a & z \\
b & w
\end{smallmatrix}\right]\right)
\,
\cE_{a+b}(z+w)
\end{equation}

Note that:
\begin{equation}
\cE_k(z)\big|0\big\rangle = \begin{cases} 
\dfrac{1}{\zeta(z)}\big|0\big\rangle, & \text{ if } k = 0 \\
0 & \text{ if }  k > 0\ .
\end{cases} 
\end{equation}



\section{$\mathcal{A}$ operators}\label{OPERATORS} 
Let $r$ be a positive natural number. The $r$-orbifold Hurwitz numbers~$h^{\bullet,[r]}_{g,\mu}$ enumerate ramified coverings of the 2-sphere by a possibly disconnected genus $g$ surface, where the ramification profile over infinity is given by a partition $\mu=(\mu_1,\dots,\mu_{l(\mu)})$ and the ramification profile over zero is $(r,\dots,r)$, there are simple ramifications over $$b := 2g - 2 + l(\mu) + \sum_{i=1}^{l(\mu)} \frac{\mu_i}{r} $$ fixed points, and there are no further ramifications. Clearly $r$ should divide the degree $d=|\mu|$ of the covering.

\begin{defin}
The genus-generating function of disconnected $r$-orbifold numbers is the following formal power series:
\begin{equation}\label{GENGEN}
\quad H^{\bullet,[r]}(\vec{\mu},u) = \sum_{g \geq 0} h^{\bullet,[r]}_{g, \vec{\mu}} \frac{u^b}{b!} 
\end{equation}
\end{defin}

The disconnected $r$-orbifold Hurwitz numbers can be expressed as vacuum expectation in the following way (see \cite{OP06, Joh10,Ok00}) : 
\begin{equation}\label{HCOMEF}
H^{\bullet,[r]}(\vec{\mu},u) = \sum_{g \geq 0} \bigg{\langle} e^{\frac{\alpha_r}{r} }  \mathcal{F}_2^b \prod_{i=1}^{l(\mu)} \frac{\alpha_{-\mu_i}}{\mu_i} \bigg{\rangle}^{\bullet} \frac{u^b}{b!} = 
\bigg{\langle} e^{\frac{\alpha_r}{r} }  e^{u\mathcal{F}_2} \prod_{i=1}^{l(\mu)} \frac{\alpha_{-\mu_i}}{\mu_i} \bigg{\rangle}^{\bullet}
\end{equation}
We want to express the vacuum expectation in a more convenient way using the so called $\mathcal{A}$ operators introduced in \cite{OP06}. We need the notations:
\begin{notat}
Recall the {\em Pochhammer symbol}:
\begin{equation*} 
(x+1)_n=\frac{(x+n)!}{x!}=\left\{\begin{array}{ll} (x+1)(x+2)\cdots(x+n) &n\geq 0 \\
(x(x-1)\cdots(x+n+1))^{-1} &n\leq 0\end{array} \right. .
\end{equation*}
From the definition, $(x+1)_n$ vanishes for $-n\leq x\leq -1$ an integer, and $1/(x+1)_n$ vanishes for $0\leq x \leq -(n+1)$ an integer.
Let $$\mathcal{S}(z) = \zeta(z)/z = \frac{\Sinh(z/2)}{z/2}$$ Moreover we split rational numbers into integer and fractional parts as follows: for $x\in\mathbb{Q}$ we have
\begin{equation}
x=\floor*{x} + \langle  x \rangle,
\end{equation}
where $\floor*{x}\in \mathbb{Z}$ and $0\leq \langle  x \rangle < 1$.
\end{notat}

\begin{defin}
The following operators will play a central role in the paper:
\begin{equation} 
\label{eq:definition-of-A-operators}
\mathcal{A}^{[r]}_{\eta}(z,u)  =r^{-\eta/r}(\mathcal{S}{(ruz)} )^{\frac{z - \eta}{r}}
\sum_{k \in \mathbb{Z}} \frac{(\mathcal{S}{(ruz)} )^k {z}^k}{(\frac{z - \eta}{r} + 1)_k }   \mathcal{E}_{kr - \eta }(uz)   
\end{equation}
\end{defin}

Define their coefficients in $z$ by $\mathcal{A}^{[r]}_{\eta}(z,u) = \sum_{k \in \mathbb{Z}} \mathcal{A}^{[r],(k)}_{\eta} z^k$.

\begin{rmk}
Our $\mathcal{A}$-operators are at the same time a specialization of Johnson's $\mathcal{A}$-operators in \cite{J09} (which we will denote by ${}_J\mathcal{A}$), and a generalization of Okounkov-Pandharipande ones in \cite{OP06}. Indeed, 
we will specialize Johnson's formulas and results in~\cite{J09} 
using the following assumptions throughout:
\begin{equation}\label{SPEC}
 K = \{e\} \qquad \qquad \qquad R = \mathbb{Z} /r\mathbb{Z}
\end{equation}
This implies that every irreducible representation of $K$ is identically one. With these conditions, Equation (5.5) in~\cite{J09} gives:

$${}_J \mathcal{A}^1_{\frac{a}{r}}(z,u) = \frac{z r^{a/r}}{z + a} \mathcal{S}(ruz)^{\frac{z+a}{r}} 
\sum_{k \in \mathbb{Z}} \frac{(\mathcal{S}(ruz))^k z^k}{\left( \frac{z + a}{r}+ 1 \right)_k} \mathcal{E}_{kr + a}(uz)$$
The two operators agree in the sense that, for $\mu$ positive integers:
\begin{equation*}\label{CFRAOP}
 {}_J\mathcal{A}^1_{1 - \langle \frac{\mu}{r} \rangle}(\mu,u) = \mathcal{A}^{[r]}_{r \langle \frac{\mu}{r} \rangle}(\mu,u) = r^{-\langle \frac{\mu}{r} \rangle}(\mathcal{S}{(ru\mu)} )^{\floor*{\frac{\mu}{r}}}
\sum_{k \in \mathbb{Z}} \frac{(\mathcal{S}{(ru\mu)} )^k {\mu}^k}{(\floor*{\frac{\mu}{r}} + 1)_k }   \mathcal{E}_{kr - r \langle \frac{\mu}{r} \rangle}(u\mu)  
 \end{equation*}

 Johnson defines his semi-infinite wedge space to be a tensor product between
usual semi-infinite wedge space and group $K$. With $K$ specialized to trivial group, however,
his definition reduces to the ordinary semi-infinite wedge space.
\end{rmk}

\begin{prop}\label{HCOMEA} The generating function for disconnected orbifold
Hurwitz can be expressed in terms of the $\mathcal{A}$ operators by:
\begin{equation} \label{eq:formula-Hurwitz-A-operators}
H^{\bullet,[r]}(\vec{\mu},u) = r^{\sum_{i=1}^{l(\vec{\mu})}  \langle \frac{\mu_i}{r}\rangle}
 \prod_{i=1}^{l(\vec{\mu})}  \frac{u^{\frac{\mu_i}{r}}\mu_i^{\floor*{\frac{\mu_i}{r}} - 1}}{\floor*{\frac{\mu_i}{r}} !}
\bigg{\langle}  \prod_{i=1}^{l(\vec{\mu})} \mathcal{A}^{[r]}_{r\langle \frac{\mu_i}{r} \rangle}(\mu_i,u) \bigg{\rangle}^{\bullet}
\end{equation}
\end{prop}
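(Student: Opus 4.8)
The plan is to start from the operator expression~\eqref{HCOMEF} for $H^{\bullet,[r]}(\vec\mu,u)$ and to push the two exponentials $e^{\alpha_r/r}$ and $e^{u\mathcal{F}_2}$ through the product $\prod_i \alpha_{-\mu_i}/\mu_i$ by conjugation. The key enabling fact is that both exponentials fix the vacuum: since $\mathcal{F}_2\,|0\rangle=0$ and $\alpha_r\,|0\rangle=0$ (the latter because $\alpha_r$ has positive energy $r$), we have $e^{u\mathcal{F}_2}|0\rangle=|0\rangle$ and $e^{\alpha_r/r}|0\rangle=|0\rangle$. Inserting $e^{-u\mathcal{F}_2}e^{u\mathcal{F}_2}$ and then $e^{-\alpha_r/r}e^{\alpha_r/r}$ between consecutive factors and telescoping, I expect to rewrite the correlator as $\prod_i\mu_i^{-1}$ times $\cor{\prod_{i} e^{\alpha_r/r}\bigl(e^{u\mathcal{F}_2}\alpha_{-\mu_i}e^{-u\mathcal{F}_2}\bigr)e^{-\alpha_r/r}}^{\bullet}$, so that everything reduces to conjugating a single $\alpha_{-\mu_i}=\cE_{-\mu_i}(0)$ by the two flows.

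The two conjugations are then computed from the commutation data in the excerpt. First, a direct computation of $[\mathcal{F}_2,\cE_a(z)]$ from the definition of $\mathcal{F}_2$ and the action of the $E_{i,j}$ gives $[\mathcal{F}_2,\cE_a(z)]=-a\,\partial_z\cE_a(z)$ for $a\ne0$; hence the conjugated family $G(u,z)=e^{u\mathcal{F}_2}\cE_a(z)e^{-u\mathcal{F}_2}$ solves the transport equation $\partial_u G=-a\,\partial_z G$ and merely shifts the argument, so specialising $a=-\mu_i$, $z=0$ yields $e^{u\mathcal{F}_2}\alpha_{-\mu_i}e^{-u\mathcal{F}_2}=\cE_{-\mu_i}(u\mu_i)$. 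Second, writing $\alpha_r=\cE_r(0)$ and applying~\eqref{COMME} gives $[\alpha_r,\cE_a(z)]=\zeta(rz)\,\cE_{a+r}(z)$, whence $(\mathrm{ad}_{\alpha_r/r})^k\cE_a(z)=r^{-k}\zeta(rz)^k\cE_{a+kr}(z)$ and $e^{\alpha_r/r}\cE_{-\mu_i}(u\mu_i)e^{-\alpha_r/r}=\sum_{k\ge0}\tfrac{\zeta(ru\mu_i)^k}{r^k\,k!}\,\cE_{-\mu_i+kr}(u\mu_i)$. Substituting $\zeta(ru\mu_i)=ru\mu_i\,\mS(ru\mu_i)$ and reindexing so that the energy label reads $mr-\eta_i$ with $\eta_i=r\langle \mu_i/r\rangle$, I will compare these $\cE$-labels with those in~\eqref{eq:definition-of-A-operators}; the Pochhammer vanishing recalled in the Notation (namely $1/(\floor*{\mu_i/r}+1)_m=0$ for integers $m<-\floor*{\mu_i/r}$) shows that the full $m\in\mathbb{Z}$ sum in the $\mathcal{A}$-operator has nonzero terms exactly in the range $m\ge-\floor*{\mu_i/r}$ produced by the conjugation.

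The crux, and the step I expect to be the main obstacle, is that after these manipulations the conjugated operators are \emph{not} proportional to the $\mathcal{A}$-operators term by term: matching the coefficient of $\cE_{mr-\eta_i}(u\mu_i)$ leaves a spurious factor $u^{\,m+\floor*{\mu_i/r}}$ depending on the summation index $m$, alongside a clean $m$-independent constant $r^{\langle \mu_i/r\rangle}\mu_i^{\floor*{\mu_i/r}}/\floor*{\mu_i/r}!$. The resolution is energy conservation: a disconnected correlator of a product $\prod_i\cE_{m_ir-\eta_i}(u\mu_i)$ vanishes unless the total energy is zero, i.e.\ unless $\sum_i m_i=\tfrac1r\sum_i\eta_i=\sum_i\langle \mu_i/r\rangle$ (an integer, since $r\mid|\mu|$). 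On this locus the accumulated power of $u$ equals $u^{\sum_i(m_i+\floor*{\mu_i/r})}=u^{\sum_i\mu_i/r}$, which no longer depends on the individual $m_i$ and can therefore be extracted from the correlator as one global constant. Extending the constrained sum back to all $\{m_i\}$ costs nothing, the extra terms being killed by the (co)vacuum, so the remaining sum reassembles into $\cor{\prod_i\mathcal{A}^{[r]}_{r\langle \mu_i/r\rangle}(\mu_i,u)}^{\bullet}$. Collecting the extracted $u$-power with the $m$-independent constants and with the $\prod_i\mu_i^{-1}$ coming from the $\alpha_{-\mu_i}/\mu_i$ then reproduces exactly the prefactor $r^{\sum_i\langle \mu_i/r\rangle}\prod_i u^{\mu_i/r}\mu_i^{\floor*{\mu_i/r}-1}/\floor*{\mu_i/r}!$ asserted in~\eqref{eq:formula-Hurwitz-A-operators}, completing the proof.
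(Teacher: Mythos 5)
Your proposal is correct and follows essentially the same route as the paper: telescoping conjugation of each $\alpha_{-\mu_i}$ by $e^{u\mathcal{F}_2}$ and $e^{\alpha_r/r}$, the commutator formula $[\alpha_r,\mathcal{E}_{-\mu}(u\mu)]=\zeta(ru\mu)\mathcal{E}_{r-\mu}(u\mu)$, and the Pochhammer reindexing $k\mapsto k-\floor{\mu/r}$. The only (equivalent) variation is in the final bookkeeping of $u$-powers: the paper conjugates by $u^{\mathcal{F}_1/r}$, which removes $u^k$ from inside each sum at the operator level, while you invoke energy conservation of the disconnected correlator to show the total $u$-power is constant on the support of the sum --- two packagings of the same grading argument, and both yield the stated prefactor.
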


\begin{proof}
Both the operators $\alpha_r$ and $\mathcal{F}_2$ annihilate the vacuum, hence we can conjugate each operator 
$\alpha_{-\mu_i}$ in \eqref{HCOMEF} by their exponent getting:
\begin{equation}\label{eq:first-operator-formula}
H^{\bullet,[r]}(\vec{\mu},u) =
\frac{1}{\prod_{i=1}^{l(\vec\mu)} \mu_i}
\bigg{\langle}  \prod_{i=1}^{l(\vec\mu)} e^{\frac{\alpha_r}{r} }  e^{u\mathcal{F}_2} \alpha_{-\mu_i} e^{-u\mathcal{F}_2} e^{-\frac{\alpha_r}{r}}\bigg{\rangle}^{\bullet}
\end{equation}
We recall Equation (2.14) in \cite{OP06}: 
$$ e^{u\mathcal{F}_2} \alpha_{-\mu} e^{-u\mathcal{F}_2} = \mathcal{E}_{-\mu} (u\mu)$$
Note that the energy is preserved to be $-\mu$. Commutator rule \eqref{COMME} gives:
$$ [\alpha_r,\mathcal{E}_{-\mu}(u\mu)] = \zeta(ru\mu)\mathcal{E}_{r-\mu}(u\mu)$$
We expand the last conjugation in nested commutators of the form above obtaining:
$$ e^{\frac{\alpha_r}{r} }  \mathcal{E}_{-\mu}(u\mu) e^{-\frac{\alpha_r}{r}} = 
\sum_{k \geq 0} \left (\frac{\zeta{(ru\mu)}}{r} \right )^k \frac{1}{k!} \mathcal{E}_{kr - \mu}(u\mu) 
=\sum_{k\geq 0} \frac{u^k\mu^k(\mathcal{S}(ru\mu))^k}{k!}\mathcal{E}_{kr-\mu}(u\mu)$$
Rescaling by $ k - \floor*{ \frac{\mu}{r}} \mapsto k$ and using the vanishing properties of the Pochhammer symbol, we can rewrite the last expression as
$$ \frac{(u\mu)^{\floor*{ \frac{\mu}{r}} } }{\floor*{ \frac{\mu}{r}}!} (\mathcal{S}(ru\mu))^{\floor*{ \frac{\mu}{r}} } \sum_{k \in \mathbb{Z}} \frac{u^k(\mathcal{S}(ru\mu))^k \mu^k}{(\floor*{ \frac{\mu}{r} } +1 )_k}   \mathcal{E}_{kr - \langle \frac{\mu}{r} \rangle r }(u\mu) $$
To match the powers of $u$ we conjugate by the exponent of the energy operator $u^{\mathcal{F}_1/r}$. Since $\mathcal{F}_1$ and its adjoint fix the vacuum, this does not affect operator expectations of products of the $\mathcal{A}$-operators. Since $\mathcal{E}_j$ has energy $j$, the conjugation removes $u^k$ from inside the sum and produces a factor of $u^{\langle \frac{\mu}{r} \rangle }$ outside. Thus we see that the vacuum expectation of the operators in~\eqref{eq:first-operator-formula} can be replaced by the vacuum expectation of the product of 
$$ \frac{u^{\frac{\mu_i}{r}}\mu_i^{\floor*{ \frac{\mu_i}{r}} } }{\floor*{ \frac{\mu_i}{r}}!} (\mathcal{S}(ru\mu_i))^{\floor*{ \frac{\mu_i}{r}} } \sum_{k \in \mathbb{Z}} \frac{(\mathcal{S}(ru\mu_i))^k \mu_i^k}{(\floor*{ \frac{\mu_i}{r} } +1 )_k}   \mathcal{E}_{kr - \langle \frac{\mu_i}{r} \rangle r }(u\mu_i) $$
for $i=1,\dots,l(\vec\mu)$. Then, using Equation~\eqref{eq:definition-of-A-operators} we can rewrite the full formula~\eqref{eq:first-operator-formula} as~\eqref{eq:formula-Hurwitz-A-operators}.
\end{proof}

Following \cite{OP06}, we define the doubly infinite series:
$$ \delta(z,-w) = \frac{1}{w}\sum_{k \in \mathbb{Z}} \left ( -\frac{z}{w}\right )^k $$
which is obtained as the difference between the following two expansions:
$$\frac{1}{z+w} = \frac{1}{w} - \frac{z}{w^2} + \frac{z^2}{w^3} - \dots , \qquad \qquad |z| < |w|$$
$$\frac{1}{z+w} = \frac{1}{z} - \frac{w}{z^2} + \frac{w^2}{z^3} - \dots , \qquad \qquad |z| > |w|$$
The series $\delta(z,-w)$ is a formal $\delta$-function at $z+w=0$ in the sense that:
$$(z+w)\delta(z,-w) = 0$$

We recall the formula for commutators of $\mathcal{A}$, that will be fundamental to prove polynomiality. Below, by $\delta_r(\eta)$ we denote the function of an integer argument that equals to $1$ if $ \eta \equiv 0\mod r$ and vanishes otherwise.

\begin{prop}[Particular case of Lemma V.4. of \cite{J09}]
Let $\eta_1,\eta_2$ be integer numbers satisfying $0\leq \eta_1,\eta_2\leq r-1$.  We have:
\begin{equation}\label{COMM}
[ \mathcal{A}^{[r]}_{\eta_1}(z,u), \mathcal{A}^{[r]}_{\eta_2}(w,u) ] = \delta_r(\eta_1 + \eta_2) zw\delta(z,-w)
\end{equation}
or equivalently:
\begin{equation}\label{COMM2}
 [\mathcal{A}_{\eta_1}^{[r],(k)}, \mathcal{A}_{\eta_2}^{[r],(l)}] = \delta_r(\eta_1 + \eta_2) (-1)^{l} \delta_{k+l-1}.
 \end{equation}
\end{prop}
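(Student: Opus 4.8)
The plan is to feed the explicit form \eqref{eq:definition-of-A-operators} of the $\mathcal{A}$-operators into the exact commutation rule \eqref{COMME} for the $\mathcal{E}$-operators and then resum the resulting double series. First I would expand both operators and use bilinearity of the bracket to write $[\mathcal{A}^{[r]}_{\eta_1}(z,u),\mathcal{A}^{[r]}_{\eta_2}(w,u)]$ as a sum over $k,l\in\Z$ of the scalar coefficients of \eqref{eq:definition-of-A-operators} times $[\mathcal{E}_{kr-\eta_1}(uz),\mathcal{E}_{lr-\eta_2}(uw)]$. By \eqref{COMME} each inner commutator equals $\zeta(D_{k,l})\,\mathcal{E}_{(k+l)r-\eta_1-\eta_2}(u(z+w))$, where $D_{k,l}=(kr-\eta_1)uw-(lr-\eta_2)uz$. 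Already at this stage the indicator $\delta_r(\eta_1+\eta_2)$ is forced: because $0\le\eta_1,\eta_2\le r-1$, a summand of total energy $0$ can occur only when $\eta_1+\eta_2\in\{0,r\}$, that is exactly when $\eta_1+\eta_2\equiv0\pmod r$.

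Next I would organise the double sum by the total index $m=k+l$, which fixes the output operator $\mathcal{E}_{mr-\eta_1-\eta_2}(u(z+w))$, and evaluate the inner sum over $k$. Splitting $\zeta(D_{k,l})=e^{D_{k,l}/2}-e^{-D_{k,l}/2}$, the $k$-dependence of each exponential factors as a constant times $X^{\pm k}$ for an explicit $X=X(z,w,u)$, so that, once the prefactors $r^{-\eta_i/r}$ and $\mathcal{S}(ruz)^{(z-\eta_1)/r}$, $\mathcal{S}(ruw)^{(w-\eta_2)/r}$ are pulled out, the inner sum takes the shape $\sum_{k\in\Z}X^{\pm k}/\big[(\tfrac{z-\eta_1}{r}+1)_k(\tfrac{w-\eta_2}{r}+1)_{m-k}\big]$. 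Writing the Pochhammer symbols as ratios of factorials and using that $1/k!$ vanishes at negative integers, this is a generalized Vandermonde sum that resums by the binomial theorem; the two exponential pieces then combine into the difference of the two natural expansions (for $|z|<|w|$ and $|z|>|w|$) of one and the same rational function of $e^{ruz},e^{ruw}$. For $m$ off resonance, $mr\ne\eta_1+\eta_2$, this function is regular at $z+w=0$, its two expansions agree, and the corresponding $\mathcal{E}_{mr-\eta_1-\eta_2}$ cancels out of the commutator. At the resonance $mr=\eta_1+\eta_2$ — possible only when $\delta_r(\eta_1+\eta_2)=1$ — the function acquires a simple pole at $z+w=0$, and the difference of its two expansions is exactly the formal distribution $\delta(z,-w)=\tfrac1w\sum_{k}(-z/w)^k$ recalled before the Proposition.

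It then remains to read off the surviving term. At resonance the output operator is $\mathcal{E}_0(u(z+w))$, and the delta localises it at $z+w=0$: its operator part $\sum_j e^{u(z+w)j}E_{j,j}$ reduces there to the charge operator $\mathcal{F}_0=\sum_j E_{j,j}$, which vanishes on $\mathcal{V}_0$, while the zero produced by the resummation cancels the pole $1/\zeta(u(z+w))$ of $\mathcal{E}_0$, leaving a scalar. Matching all the prefactors, $\mathcal{S}$-powers and factorials through the resummation so that the residue comes out to exactly $zw$ is the step I expect to be the main obstacle; it is pure normalization bookkeeping rather than a conceptual difficulty, and it yields $[\mathcal{A}^{[r]}_{\eta_1}(z,u),\mathcal{A}^{[r]}_{\eta_2}(w,u)]=\delta_r(\eta_1+\eta_2)\,zw\,\delta(z,-w)$, i.e.\ \eqref{COMM}.

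Finally, \eqref{COMM2} is obtained by extracting Fourier coefficients: since $zw\,\delta(z,-w)=\sum_{j\in\Z}(-1)^j z^{j+1}w^{-j}$, comparing the coefficient of $z^kw^l$ on both sides of \eqref{COMM} forces $k+l=1$ and produces the sign $(-1)^l$, giving $[\mathcal{A}^{[r],(k)}_{\eta_1},\mathcal{A}^{[r],(l)}_{\eta_2}]=\delta_r(\eta_1+\eta_2)(-1)^l\delta_{k+l-1}$.
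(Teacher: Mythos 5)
You should first note what the paper itself does here: it gives no proof at all, importing the statement as a particular case of Lemma~V.4 of \cite{J09}, whose proof in turn adapts the Okounkov--Pandharipande argument for their $\mathcal{A}$-operators in \cite{OP06}. Your plan --- expand both operators via \eqref{eq:definition-of-A-operators}, apply \eqref{COMME} term by term, group by the total index $m=k+l$, resum the inner sum, and analyze the resonant term --- is exactly the skeleton of that cited proof, and the peripheral steps you describe are sound: the energy bookkeeping forcing $\eta_1+\eta_2\in\{0,r\}$, the role of the charge operator $\mathcal{F}_0$ vanishing on $\mathcal{V}_0$, and the derivation of \eqref{COMM2} from \eqref{COMM} by matching coefficients in $zw\,\delta(z,-w)=\sum_{j\in\Z}(-1)^jz^{j+1}w^{-j}$ are all correct.

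The genuine gap is the step you call ``a generalized Vandermonde sum that resums by the binomial theorem.'' Executed literally, that step fails: since the Pochhammer parameters $\frac{z-\eta_i}{r}$ are not integers, the factors $1/\bigl(\frac{z-\eta_1}{r}+1\bigr)_k$ do \emph{not} vanish for negative $k$, so the inner sum is a genuinely bilateral series
$\sum_{k\in\Z} Y^{k}\big/\bigl[\Gamma\bigl(\tfrac{z-\eta_1}{r}+k+1\bigr)\Gamma\bigl(\tfrac{w-\eta_2}{r}+m-k+1\bigr)\bigr]$
that no formal binomial or Vandermonde manipulation can evaluate. What is needed is the bilateral binomial theorem (a degenerate case of Dougall's ${}_2H_2$ summation), which is an \emph{analytic} statement: the bilateral series converges only when $|Y|=1$, and its evaluation as a power of $1+Y_{\pm}$, with $Y_{\pm}=\frac{\zeta(ruz)}{\zeta(ruw)}\,e^{\pm ru(z+w)/2}$, must then be transported to the actual arguments by analytic continuation of matrix elements. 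This is precisely why the convergence statement on the domain $\Omega$ (Theorem~V.2 of \cite{J09}, quoted right after the Proposition) is part of the package, and why the whole computation must be run in the analytic rather than the formal category. Relatedly, your treatment of the resonant term is too quick: the product of the would-be delta function with the central part $1/\zeta(u(z+w))$ of $\mathcal{E}_0(u(z+w))$ is a priori an expression of the form $0\cdot\infty$ (recall $(z+w)\delta(z,-w)=0$), and showing that the resummed coefficient carries exactly the power of $\zeta(ru(z+w))$ needed to tame that pole and leave the scalar $zw\,\delta(z,-w)$ is not ``normalization bookkeeping'' --- it is the actual content of the lemma, and it is the step you yourself flag as the main obstacle. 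So what you have is a correct reconstruction of the strategy of the cited proof, with its analytic core asserted rather than proved.
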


We define $\Omega\subset \mathbb{C}^n$ by
$$
\Omega=\bigg\{(z_1,\dots, z_n)\in \mathbb{C}^n\bigg|\forall k,|z_k|>\sum_{i=1}^{k-1}|z_i|\bigg\}.
$$

Specializing Theorem V.2 of~\cite{J09} with the convention \eqref{SPEC} (see also Section~2.4 in~\cite{DKOSS}) we have the following:

\begin{prop} For any integer numbers $\eta_1,\dots,\eta_n$, $0\leq \eta_1,\dots,\eta_n\leq r-1$,
the Laurent series expansion of 
$$
\bigg{\langle}  \mathcal{A}^{[r]}_{\eta_1}( z_1, u)\cdots \mathcal{A}^{[r]}_{\eta_1}( z_1, u) \bigg{\rangle}^{\bullet}
$$
in $u,z_1,\dots,z_n$ converges to an analytic function for $(z_1,\dots,z_n)\in\Omega$ and sufficiently small $u\not=0$.
\end{prop}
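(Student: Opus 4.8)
The plan is to reduce everything to vacuum expectations of products of the operators $\mathcal{E}_n$ and then to prove convergence by a direct estimate on the resulting multiple series. First I would substitute the definition \eqref{eq:definition-of-A-operators} of each $\mathcal{A}^{[r]}_{\eta_i}(z_i,u)$ into the correlator and pull the manifestly analytic prefactors $r^{-\eta_i/r}(\mathcal{S}(ruz_i))^{(z_i-\eta_i)/r}$ out. Since $\mathcal{S}(ruz)=1+O((ruz)^2)$ is entire and nonvanishing near $u=0$, these prefactors and all the factors $(\mathcal{S}(ruz_i))^{k_i}$ are holomorphic for small $u$; they contribute only bounded multiplicative constants and do not affect convergence. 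What remains is the series
$$\sum_{k_1,\dots,k_n\in\mathbb{Z}}\left(\prod_{i=1}^n\frac{z_i^{k_i}}{(\tfrac{z_i-\eta_i}{r}+1)_{k_i}}\right)\bigl\langle\,\mathcal{E}_{k_1r-\eta_1}(uz_1)\cdots\mathcal{E}_{k_nr-\eta_n}(uz_n)\,\bigr\rangle^{\bullet},$$
and the task is to show it converges on $\Omega\times\{0<|u|<\epsilon\}$.

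Second, I would compute the inner $\mathcal{E}$-correlator in closed form by iterating the commutation relation \eqref{COMME} and using that positive-energy operators annihilate $|0\rangle$, that negative-energy operators annihilate $\langle 0|$, and that $\mathcal{E}_0(z)|0\rangle=\zeta(z)^{-1}|0\rangle$. Moving operators successively toward the vacuum produces, for each fixed energy vector $(a_1,\dots,a_n)=(k_1r-\eta_1,\dots,k_nr-\eta_n)$, a finite sum of products of factors $\zeta\bigl(u\cdot(\text{partial sums of the }z_i)\bigr)$ divided, on each connected block, by a single $\zeta\bigl(u\sum z_i\bigr)$. Two structural facts emerge. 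The correlator vanishes unless $\sum_i a_i=0$, so the contributing multi-indices $(k_i)$ form an $(n-1)$-dimensional lattice; and, more importantly, the annihilation rules force it to vanish unless the ordered partial sums $a_1+\cdots+a_j$ are non-negative. This ``energy cone'' is exactly what will interact with the geometry of $\Omega$.

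Third, I would insert this closed form and analyze the tails of the multiple series along the cone. Along a direction on which one index $k_i\to+\infty$ is balanced against complementary indices tending to $-\infty$, the positive-index Pochhammer factor behaves like $(s_iz_i)^{k_i}/k_i!$ while the negative-index factors contribute reciprocal factorials; the factorials cancel and leave only polynomially small prefactors together with geometric ratios of the $z$'s. The $\zeta$-quotients grow only exponentially in the energies, i.e.\ like $\bigl(e^{r|u|\,|z_{i_1}+\cdots|/2}\bigr)^{k}$, which for small $u$ is a harmless perturbation of $1$. The net summand along each ray is thus bounded by $C\,k^{-2}\rho^{\,k}$, with $\rho$ a product of ratios of the form $|z_{i_1}+\cdots+z_{i_m}|/|z_{i_{m+1}}|$; the defining inequalities $|z_k|>\sum_{i<k}|z_i|$ of $\Omega$ guarantee $\rho<1$ with a uniform gap on compacta, giving absolute, locally uniform convergence to an analytic function.

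The main obstacle I anticipate is precisely this last step for general $n$: organizing the $(n-1)$-dimensional lattice sum over the energy cone, verifying the factorial cancellation along \emph{every} coordinate direction simultaneously, and exhibiting a single dominating convergent series uniformly on compact subsets of $\Omega\times\{0<|u|<\epsilon\}$. The delicate point is the exact matching between the cone of admissible energy vectors produced by the annihilation rules and the staircase inequalities defining $\Omega$: each positivity constraint on a partial energy sum must correspond to a ratio $|z_{i_1}+\cdots+z_{i_m}|<|z_{i_{m+1}}|$ that is strictly $<1$ on $\Omega$, which is exactly what renders the associated geometric series summable. Everything else---analyticity of the prefactors, the closed form for the $\mathcal{E}$-correlator, and the cancellation of factorials---should be comparatively routine once this combinatorial bookkeeping is in place.
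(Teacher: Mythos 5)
First, a point of orientation: the paper does not actually prove this proposition --- it is quoted as a specialization (via the convention $K=\{e\}$, $R=\mathbb{Z}/r\mathbb{Z}$) of Theorem~V.2 of Johnson \cite{J09}, which in turn extends the convergence argument of Okounkov--Pandharipande \cite{OP06} for their $\mathcal{A}$-operators. Your outline --- substitute the definition, reduce to $\mathcal{E}$-correlators, use energy conservation together with the non-negativity of the left partial sums of energies forced by the annihilation rules, then dominate the resulting lattice sum by geometric series coming from the staircase inequalities of $\Omega$ --- is in essence the strategy of those references, not a different route. Your structural claims are correct: the correlator vanishes unless $\sum_i a_i=0$ and all prefix sums $a_1+\cdots+a_j$ are non-negative, and iterated commutation does reduce everything to quotients of $\zeta$-factors.

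However, as a proof this has a genuine gap, which you flag yourself: the uniform estimate for general $n$ \emph{is} the proposition, and it is exactly the part you defer. Worse, two intermediate claims are false as stated, and they conceal precisely this difficulty. (a) The factors $(\mathcal{S}(ruz_i))^{k_i}$ are not ``bounded multiplicative constants'': they grow exponentially in the summation indices, as do the $\zeta$-quotients, whose arguments involve the energies $k_ir-\eta_i$ (the commutator produces $\zeta(u(az_j-bz_i))$, not $\zeta$ of partial sums of the $z_i$); such factors can only be absorbed \emph{after} one has exhibited a geometric ratio strictly below $1$ uniformly on compacta, so they cannot be discarded at the outset. (b) The factorials do not ``cancel to leave only polynomially small prefactors''. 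The ray you analyze (one index $k_i\to+\infty$ against several negative indices) is the benign one, since $\prod_j|k_j|!\le(\sum_j|k_j|)!$. The dangerous rays --- which the energy cone not only allows but makes typical, since the last and largest variable must absorb all the energy --- are several positive indices feeding one negative index: there the ratio of factorials is a multinomial coefficient, of size up to $(n-1)^K$ in the total energy $K$. For instance, for $a_1=a_2=m$, $a_3=-2m$ the summand has order $\binom{2m/r}{m/r}\bigl(|z_1||z_2|/|z_3|^2\bigr)^{m/r}\sim\bigl(4|z_1||z_2|/|z_3|^2\bigr)^{m/r}$, and to push the ratio below $1$ you need AM--GM, $4|z_1||z_2|\le(|z_1|+|z_2|)^2$, combined with the strict inequality $|z_3|>|z_1|+|z_2|$. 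In general these exponentially large multinomial factors must be resummed against the $z$-monomials via the multinomial theorem, turning the cone sum at total energy $K$ into essentially $\bigl((|z_1|+\cdots+|z_{n-1}|)/|z_n|\bigr)^K$; equivalently, one runs an induction on $n$ by inserting a complete set of intermediate states and bounding matrix elements, as in \cite{OP06,J09}. Until that resummation (or induction) is carried out, your claimed bound $Ck^{-2}\rho^k$ is unjustified --- the form of $\rho$, with sums of moduli in the numerators, is exactly what has to be \emph{proved}, and it is the whole content of the theorem.
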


\begin{notat} For brevity in the rest of the paper we denote  $\mathcal{A}^{[r]}_{\eta}(z,u)$ by $\mathcal{A}_{\eta}(z)$.
\end{notat}



\section{Quasi-polynomiality}\label{POLY}
In this section we derive quasi-polynomiality of $r$-orbifold Hurwitz numbers (Theorem \ref{ThPol}). The argument that we use is a suitable generalization of  an argument of \cite{DKOSS}. 

\subsection{Connected vacuum expectations}
\label{INCL-EXCL-FORMULA}

Proposition \ref{HCOMEA} expresses the genus-generating function of disconnected orbifold Hurwitz in terms of vacuum expectation of $\mathcal{A}$-operators.
Our first goal is have a similar expression for connected orbifold Hurwitz numbers 

\begin{defin} We define the connected correlators $	{\langle}  \mathcal{A}_{\eta_1}( z_1)\cdots \mathcal{A}_{\eta_1}( z_1) {\rangle}^{\circ}$ in terms of the disconnected correlators $\langle\cdots\rangle^{\bullet}$ via the inclusion-exclusion formula. 
\end{defin}

The inverse form of the {inclusion-exclusion formula} reads (cf. \cite{DKOSS}):
\begin{equation}\label{INCEXC}
\cor{\A_{\eta_1}(z_1)\dots\A_{\eta_n}(z_n)}^{\bullet}_k
=\sum_{y \in \mathcal{Y}_{n,k}}\prod_{i=1}^{h(y)}
\ccor{\mathcal{A}_{\eta_{c_{i,1}(y)}}(z_{c_{i,1}(y)})\dots \mathcal{A}_{\eta_{c_{i,l_i(y)}(y)}}(z_{c_{i,l_i(y)}(y)})}_{\lambda_i(y)}
\end{equation}
Here $\mathcal{Y}_{n,k}$ is the finite set of $\bs{1,\dots,n}$-Young tableaux $y$ with the following properties:
\begin{enumerate}
	\item The numbers in the rows should be ascending: $c_{i,j}(y)$ is the number in the $i$-th row and $j$-th column, then for any $i$ and for any $j_1 < j_2$ we have $c_{i,j_1}(y) < c_{i,j_2}(y)$. Each row corresponds to an individual connected correlator.
	\item For rows of the same length, just for the first column the numbers should be ascending:  $l_i(y)$ is length of the $i$-th row, then if $l_{i_1}(y)=l_{i_2}(y)$ and $i_1<i_2$, then $c_{i_1,1}(y) < c_{i_2,1}(y)$.
	\item $h(y)$ is the number of rows. Rows are labelled by the vector $\{\lambda_i(y) \in\bs{-1,0,1,\dots} \}_i$ with $\sum_{i=1}^{h(y)}\lambda_i(y) = k$. The vector $\vec{\lambda}$ corresponds to the vector of Euler characteristics of correlators with sign exchanged.
\end{enumerate}

\mathpic{[scale=0.5]
	\node [right] at (0 + 1,0 + 2) {Trivial example};
	\begin{youngTableau}[begin x = 0]
		\WCell{-1} \cell{1}\cell{2}\cell{3}\cell{4}\cell{5} \break
		\WCell{0} \cell{6}\cell{7}\cell{8}\cell{9} \break
		\WCell{1} \cell{10}\cell{11}\cell{12}\cell{13} \break
		\WCell{2} \cell{14}
	\end{youngTableau}
	\node [align=center] at (10 + 1 + 2.5,0 + 2) {More complicated example \\ (allowed disorder marked blue)};
	\begin{youngTableau}[begin x = 10]
		\WCell{-1} \BCell{2}\cell{5}\cell{6}\cell{7}\cell{8} \break
		\WCell{0} \BCell{1}\cell{9}\BCell{12}\cell{13} \break
		\WCell{1} \BCell{4}\cell{10}\BCell{11}\cell{14} \break
		\WCell{2} \BCell{3}
	\end{youngTableau}
	\node [align=center] at (20 + 1 + 2.5,0 + 2) {Incorrect example \\ (errors marked red)};
	\begin{youngTableau}[begin x = 20]
		\WCell{-1} \cell{2}\cell{5}\RCell{7}\RCell{6}\cell{8} \break
		\WCell{0} \RCell{4}\cell{9}\cell{12}\cell{13} \break
		\WCell{1} \RCell{1}\cell{10}\cell{11}\cell{14} \break
		\WCell{2} 
		\cell{3}
	\end{youngTableau}
}

\begin{rmk}
	For $n=1$ we have that connected and disconnected correlators coincide, hence we just write $\cor{\A_{\eta}(z)}$.
\end{rmk}

The connected correlators can be used to express the generating function for connected orbifold Hurwitz numbers:
\begin{equation}
\quad H^{\circ,[r]}(\vec{\mu},u): = \sum_{g \geq 0} h^{\circ,[r]}_{g, \vec{\mu}}.
\frac{u^b}{b!} 
\end{equation}

\begin{prop}\label{CONNGEN}
Generating function for connected orbifold Hurwitz numbers equals:
\begin{equation} \label{CONNGENGEN}
H^{\circ, [r]}(      \vec{\mu},u) = r^{\sum_{i=0}^{l(\vec\mu)} \langle \frac{\mu_i}{r} \rangle}
 \prod_{i=1}^{l(\vec{\mu})}  \frac{u^{\frac{\mu_i}{r}}\mu_i^{\floor*{\frac{\mu_i}{r}} - 1}}{\floor*{\frac{\mu_i}{r}} !}
\bigg{\langle}  \prod_{i=1}^{l(\vec{\mu})} \mathcal{A}_{r\langle \frac{\mu_i}{r} \rangle}(\mu_i) \bigg{\rangle}^{\circ}
\end{equation}
\end{prop}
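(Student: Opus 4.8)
The plan is to deduce the connected statement from the disconnected one in Proposition~\ref{HCOMEA} by passing through the inclusion--exclusion formula \eqref{INCEXC}, exploiting the fact that the combinatorial prefactor in \eqref{eq:formula-Hurwitz-A-operators} is \emph{multiplicative} over the parts of $\vec\mu$. Writing
\begin{equation*}
p(\mu):=r^{\langle \mu/r\rangle}\,\frac{u^{\mu/r}\,\mu^{\floor*{\mu/r}-1}}{\floor*{\mu/r}!},
\end{equation*}
the prefactor equals $\prod_{i=1}^{l(\vec\mu)}p(\mu_i)$, so Proposition~\ref{HCOMEA} reads $H^{\bullet,[r]}(\vec\mu,u)=\prod_i p(\mu_i)\,\cor{\prod_i\mathcal{A}_{r\langle \mu_i/r\rangle}(\mu_i)}^{\bullet}$. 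The whole point will be that this single multiplicative factor distributes correctly across both inclusion--exclusion expansions.

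First I would record the standard disconnected--connected relation for the Hurwitz numbers themselves. A disconnected cover is a disjoint union of connected ones, and since the parts of $\mu$ are labelled, the connected components induce a set partition of $\{1,\dots,n\}$ into nonempty blocks; distributing the $b$ labelled simple branch points among the components produces multinomial coefficients that cancel the $1/b!$ against the $\prod_B 1/b_B!$. At the level of generating functions this yields the clean identity
\begin{equation*}
H^{\bullet,[r]}(\vec\mu,u)=\sum_{S}\ \prod_{B\in S} H^{\circ,[r]}(\vec\mu_B,u),
\end{equation*}
the sum running over set partitions $S$ of $\{1,\dots,n\}$ into nonempty blocks $B$, with $\vec\mu_B:=(\mu_i)_{i\in B}$.

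The heart of the argument, and the step I expect to be the main obstacle, is to identify this set-partition expansion with the tableau expansion \eqref{INCEXC} of the disconnected correlator. Here everything hinges on matching the gradings: once the factor $u^{\sum_{i\in B}\mu_i/r}$ is absorbed into the multiplicative prefactor, the residual power of $u$ carried by a connected correlator on a block $B$ is exactly $2g_B-2+l(\vec\mu_B)=-\chi_B$, which is precisely the index $\lambda_i\in\{-1,0,1,\dots\}$ labelling the $i$-th row in \eqref{INCEXC} (the minimal value $-1$ occurring for a single $\mathcal{A}$-operator). Thus summing over the tableaux in $\mathcal{Y}_{n,k}$ and then over $k$ --- that is, over set partitions together with a distribution of the total degree $k=\sum_i\lambda_i$ among the rows --- reproduces exactly the set-partition sum above together with the distribution of genus/Euler characteristic among the connected components; the ordering conventions on rows serve only to fix a canonical representative of each set partition and do not alter the sum. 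Consequently, multiplying \eqref{INCEXC} (summed over $k$) by $\prod_i p(\mu_i)$ and using $\prod_{i=1}^{l(\vec\mu)}p(\mu_i)=\prod_{B\in S}\prod_{i\in B}p(\mu_i)$ turns it into a sum over the same index set as the Hurwitz relation, block by block.

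Finally I would close the argument by induction on $n=l(\vec\mu)$. For $n=1$ connected and disconnected correlators coincide (the Remark following \eqref{INCEXC}), so the claim is Proposition~\ref{HCOMEA} itself. For $n>1$ I would isolate in both expansions the top term corresponding to the one-block partition (equivalently, the single-row tableau): it contributes $H^{\circ,[r]}(\vec\mu,u)$ on the Hurwitz side and $\prod_i p(\mu_i)\,\cor{\prod_i\mathcal{A}_{r\langle \mu_i/r\rangle}(\mu_i)}^{\circ}$ on the correlator side. Every remaining term involves only blocks of size strictly smaller than $n$, on which the inductive hypothesis $H^{\circ,[r]}(\vec\mu_B,u)=\prod_{i\in B}p(\mu_i)\,\cor{\prod_{i\in B}\mathcal{A}_{r\langle \mu_i/r\rangle}(\mu_i)}^{\circ}$ applies. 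Since the two remainders then agree term by term, so must the two top terms, which is exactly the asserted formula \eqref{CONNGENGEN}.
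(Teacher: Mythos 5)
Your proposal is correct and follows essentially the same route as the paper: the paper's (one-sentence) proof likewise deduces \eqref{CONNGENGEN} from Proposition~\ref{HCOMEA} by matching the inclusion--exclusion structure defining $\ccor{\cdots}$ with the connected/disconnected decomposition of Hurwitz numbers, the key points being exactly the multiplicativity of the prefactor and the observation that, after absorbing $u^{|\mu|/r}$ into the prefactor, the $u^b$-coefficient of $H^{\circ,[r]}$ corresponds to the $u^{2g-2+l(\vec\mu)}$-coefficient of the connected correlator. You simply make explicit (via the set-partition identification of the tableaux and the closing induction on $n$) what the paper leaves implicit.
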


\begin{proof}
This follows from \eqref{eq:formula-Hurwitz-A-operators} and the observation that
taking $u^b$-coefficient in $H^\circ$ corresponds to 
the coefficient
of $u^{2 g - 2 + l(\vec\mu)}$ in $\langle \prod \mathcal{A} \rangle^{\circ}$.
\end{proof}

\subsection{Unstable terms} In this Section we compute explicitly the coefficients of the connected vacuum expectations that correspond to the orbifold Hurwitz number for $g=0$ and $n=1,2$. 

First, let us introduce some convenient notations.
\begin{notat}
For any operator $\mathcal{P}(u)$ define
\begin{align}
\cor{\mathcal{P}(u)}^{\bullet}_k &:= [u^k]\cor{\mathcal{P}(u)}^{\bullet} & (\mathrm{the\ coefficient\ of\ } u^k\  \mathrm{in}\ \cor{\mathcal{P}(u)}^{\bullet}) \\ \nonumber
\ccor{\mathcal{P}(u)}_k &:= [u^k]\ccor{\mathcal{P}(u)} & (\mathrm{the\ coefficient\ of\ } u^k\  \mathrm{in}\ \ccor{\mathcal{P}(u)})
\end{align}
\end{notat}

\begin{notat} We denote by $\A_{\eta,+}(z)$ the positive power part in $z$ of the $\A_\eta(z)$ operator to be:
	\begin{equation}\label{POSPOW}
	\A_{\eta,+}(z) := \sum_{k \geq 1} \A_{\eta}^{(k)}z^k
	\end{equation}
\end{notat}


The terms that we want to compute are
\begin{equation}
\ccor{\A_{\eta_i}(z_i)}_{-1} \qquad \qquad \text{and} \qquad \qquad \ccor{\A_{\eta_i}(z_i)\A_{\eta_j}(z_j)}_{0}
\end{equation}

\begin{lem}
Let $\eta,\eta_1,\eta_2$ be integer number, $0\leq \eta\leq r-1$. We have:
\begin{align}
\label{1pt}
\ccor{\A_{\eta}(z)}_{-1} & =  \dfrac{\deeta}{z} ,
\\
\label{2pt}
\ccor{\A_{\eta_1}(z_1)\A_{\eta_2}(z_2)}_{0} & = \delta_r(\eta_1 + \eta_2)z_1 \sum_{k \geq 0}  \left( - \frac{z_1}{z_2}\right)^k.
\end{align}
\end{lem}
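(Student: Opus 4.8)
The plan is to reduce both identities to the action of the operators $\cE_n(uz)$ on the (co)vacuum, and, for the two-point function, to use the commutation relation \eqref{COMM2} to determine the correlator while bypassing an otherwise delicate hypergeometric summation. For \eqref{1pt}, connected and disconnected one-point correlators coincide. In the expansion \eqref{eq:definition-of-A-operators} of $\A_\eta(z)=\mathcal{A}^{[r]}_\eta(z,u)$ only the summand containing $\cE_{kr-\eta}(uz)$ with $kr-\eta=0$ has nonzero vacuum expectation, because $\cE_m(uz)$ annihilates the vacuum for $m>0$ and is annihilated by the covacuum for $m<0$. Since $0\le\eta\le r-1$, the equation $kr=\eta$ forces $\eta=0$ and $k=0$, which produces the factor $\deeta$. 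For $\eta=0$ the surviving term gives $\cor{\A_0(z)}=(\mathcal{S}(ruz))^{z/r}/\zeta(uz)$; writing $\zeta(uz)=uz\,\mathcal{S}(uz)$ and using that $\mathcal{S}$ is even with $\mathcal{S}(0)=1$, so that $(\mathcal{S}(ruz))^{z/r}/\mathcal{S}(uz)=1+O(u^2)$, we read off $[u^{-1}]\cor{\A_0(z)}=1/z$, which is \eqref{1pt}.

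For \eqref{2pt} I would pass to the coefficients $\A_\eta^{(k)}$ defined by $\A_\eta(z)=\sum_k\A_\eta^{(k)}z^k$ and prove the equivalent statement that $\ccor{\A_{\eta_1}^{(k)}\A_{\eta_2}^{(l)}}_0$ equals $\delta_r(\eta_1+\eta_2)(-1)^l$ when $k+l=1$ and $l\le 0$, and vanishes otherwise; multiplying by $z_1^kz_2^l$ and summing then reproduces $\delta_r(\eta_1+\eta_2)\,z_1\sum_{k\ge0}(-z_1/z_2)^k$. The first step is to extract the $u^0$-coefficient of $\ccor{\A_{\eta_1}(z_1)\A_{\eta_2}(z_2)}$ directly. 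By linearity this connected correlator only sees the connected part of $\cor{\cE_a(uz_1)\cE_b(uz_2)}^{\bullet}$, which by the commutator \eqref{COMME} and the vacuum relations is nonzero precisely when $a+b=0$ and $a>0$, where it equals $\zeta(au(z_1+z_2))/\zeta(u(z_1+z_2))=a\,\mathcal{S}(au(z_1+z_2))/\mathcal{S}(u(z_1+z_2))=a+O(u^2)$. Since the prefactors $(\mathcal{S}(ruz_i))^{(z_i-\eta_i)/r}$ and the factors $(\mathcal{S}(ruz_i))^{k_i}$ all equal $1+O(u^2)$, the $u^0$-coefficient is the sum, over pairs $(k_1,k_2)$ with $a:=k_1r-\eta_1>0$ and $(k_1+k_2)r=\eta_1+\eta_2$, of $a\,\frac{z_1^{k_1}}{(\frac{z_1-\eta_1}{r}+1)_{k_1}}\,\frac{z_2^{k_2}}{(\frac{z_2-\eta_2}{r}+1)_{k_2}}$; in particular it vanishes identically unless $\eta_1+\eta_2\equiv0\bmod r$.

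From this expression I would read off the \emph{support}, not the value. Since $k_1\ge 1$ and the Pochhammer $(\frac{z_1-\eta_1}{r}+1)_{k_1}$ has nonzero constant term (each factor $j-\eta_1/r>0$ for $1\le j\le k_1$), the $z_1$-factor contributes only powers $\ge k_1\ge 1$; since $k_2=(\eta_1+\eta_2)/r-k_1\le 0$, the $z_2$-factor contributes only powers in the range $k_2,\dots,0$. Hence $\ccor{\A_{\eta_1}^{(k)}\A_{\eta_2}^{(l)}}_0=0$ unless $k\ge 1$ and $l\le 0$, and by the symmetry exchanging the two insertions $\ccor{\A_{\eta_2}^{(l)}\A_{\eta_1}^{(k)}}_0=0$ unless $l\ge 1$ and $k\le 0$. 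Because the right-hand side of \eqref{COMM2} is $u$-independent, taking its $u^0$-coefficient yields
\begin{equation*}
\ccor{\A_{\eta_1}^{(k)}\A_{\eta_2}^{(l)}}_0-\ccor{\A_{\eta_2}^{(l)}\A_{\eta_1}^{(k)}}_0=\delta_r(\eta_1+\eta_2)(-1)^l\delta_{k+l,1}.
\end{equation*}
For $k\ge 1$ the second term vanishes, so $\ccor{\A_{\eta_1}^{(k)}\A_{\eta_2}^{(l)}}_0=\delta_r(\eta_1+\eta_2)(-1)^l\delta_{k+l,1}$; for $k\le 0$ the first term already vanishes by the support statement. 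Together these give the claimed coefficients and hence \eqref{2pt}.

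The main obstacle is exactly the support analysis: one must verify that, despite the non-homogeneous Pochhammer denominators, each connected contribution to the $u^0$-coefficient occupies only the quadrant $\{k\ge 1,\ l\le 0\}$. This is what allows the commutator \eqref{COMM2} --- which on its own controls only the antisymmetric part of the correlator --- to determine it completely, and it replaces the direct, considerably more painful, hypergeometric evaluation of the sum over $(k_1,k_2)$.
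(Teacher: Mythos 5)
Your proof is correct, and while it rests on the same two pillars as the paper's argument --- the coefficient commutation relation \eqref{COMM2}, which supplies the actual value, and an energy-conservation/Pochhammer analysis of the $u^0$-coefficient, which supplies a vanishing statement --- you organize them genuinely differently. The one-point computation \eqref{1pt} is identical to the paper's. For \eqref{2pt}, the paper first proves the covacuum-action formula $\langle 0|\A_{\eta}(z) = \frac{\deeta}{uz}\langle 0| + \langle 0| \A_{\eta,+}(z)$, introduces the truncated operators $\A_{\eta,+}$ of \eqref{POSPOW}, and massages the inclusion-exclusion identity into $\ccor{\A_{\eta_1}(z_1)\A_{\eta_2}(z_2)}_{0} = \cor{[\A_{\eta_1,+}(z_1),\A_{\eta_2}(z_2)]}^{\bullet}_0 + \cor{\A_{\eta_2,+}(z_2)\A_{\eta_1,+}(z_1)}^{\bullet}_0$, where the commutator term gives the answer and the reversed product is killed by showing that the $u^0$-coefficient of \eqref{ZEROPOW} contains no monomials with positive powers in both variables. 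You bypass the covacuum formula and the $\A_{+}$ operators entirely: you expand the connected correlator over \emph{connected} $\cE$-correlators (nonzero only for opposite indices $a=-b>0$, where they equal $a+O(u^2)$), deduce that the $u^0$-part is supported in the quadrant $\{k\geq 1,\ l\leq 0\}$, and let the antisymmetrization \eqref{COMM2}, combined with the disjointness of the supports of the two operator orderings, pin down the values. Your support computation is essentially the same Pochhammer analysis as the paper's case study of \eqref{ZEROPOW} (in fact slightly more careful, since you keep the factor $a$ coming from the $\cE$-correlator, which the paper silently drops --- harmlessly, as only supports matter there). What the paper's route buys is reusable machinery: the covacuum formula and the $\A_{+}$ operators are exactly what drive the recursions \eqref{REC1}--\eqref{REC2} and the stable-correlator formula \eqref{STAB} in the subsequent polynomiality argument, whereas your support-plus-antisymmetry argument is cleaner but tailored to the two-point case. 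One cosmetic slip: your displayed $u^0$-coefficient omits the constant $r^{-(\eta_1+\eta_2)/r}$ coming from the normalization $r^{-\eta/r}$ in \eqref{eq:definition-of-A-operators}; since you use that expression only to read off the support, and the value comes from \eqref{COMM2}, this does not affect the argument.
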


\begin{proof} In the vacuum expectation of a single operator $\mathcal{A}_\eta(z)$ only zero-energy term can give non-trivial contribution. Since $\mathcal{E}_i$ has energy $i$, we have:
\begin{equation}\label{DEVA}
\cor{\A_{\eta}(z)} = \delta_{\eta,0}\frac{\zeta(ruz)^{z/r}}{(ruz)^{z/r}}\frac{1}{\zeta(uz)} =
   \left[ \dfrac{1}{uz}+\dfrac{z(rz-1)}{24}u+O(u^2) \right] \deeta
\end{equation}
This implies the formula for the genus-zero one-point correlator. The rest of the proof is devoted to the genus-zero two-pointed correlator.

Note that the following formula for the action of $\A_{\eta}(z)$ on covacuum holds
\begin{equation}\label{trikkone}
\langle 0|\A_{\eta}(z) = \dfrac{\deeta}{uz}\langle 0| + \langle 0| \A_{\eta,+}(z),
\end{equation}
which follows directly from Equation~\eqref{eq:definition-of-A-operators} and two
observations:
\begin{itemize}
\item $\mathcal{E}_{kr - \eta }(uz)$ annihilates the covacuum when $kr - \eta < 0$
\item Among the terms that do not annihilate the covacuum, only the term with $\mathcal{E}_{0}(uz)$ is singular in $z$ at $z = 0$
\end{itemize}

Equation~\eqref{INCEXC} implies that 
\begin{align*}
\ccor{\A_{\eta_1}(z_1)\A_{\eta_2}(z_2)}_{0}
 = \cor{\A_{\eta_1}(z_1)\A_{\eta_2}(z_2)}^{\bullet}_0  
 - \cor{\A_{\eta_1}(z_1)}_{-1}\cor{\A_{\eta_2}(z_2)}_{1} \\
 -\cor{\A_{\eta_1}(z_1)}_{1}\cor{\A_{\eta_2}(z_2)}_{-1}
\end{align*}
Applying \eqref{trikkone} to the first term in the right-hand side we get:
\begin{equation*}
\cor{\A_{\eta_1}(z_1)\A_{\eta_2}(z_2)}^{\bullet}_0 = \cor{\A_{\eta_1,+}(z_1)\A_{\eta_2}(z_2)}^{\bullet}_0+\cor{\A_{\eta_1}(z_1)}_{-1}\cor{\A_{\eta_2}(z_2)}_{1}
\end{equation*}
In the same way, we observe that 
\begin{align*}
\cor{\A_{\eta_1,+}(z_1)\A_{\eta_2}(z_2)}^{\bullet}_0 = \cor{[\A_{\eta_1,+}(z_1),\A_{\eta_2}(z_2)]}^{\bullet}_0+\cor{\A_{\eta_2,+}(z_2)\A_{\eta_1,+}(z_1)}^{\bullet}_0
\\ +  \cor{\A_{\eta_1}(z_1)}_{1}\cor{\A_{\eta_2}(z_2)}_{-1}
\end{align*}
Therefore,
\begin{equation*}
\ccor{\A_{\eta_1}(z_1)\A_{\eta_2}(z_2)}_{0} = 
\cor{\A_{\eta_2,+}(z_2)\A_{\eta_1,+}(z_1)}^{\bullet}_0 +  \cor{[\A_{\eta_1,+}(z_1),\A_{\eta_2}(z_2)]}^{\bullet}_0
\end{equation*}
The second term here is equal to the right hand side of Equation~\eqref{2pt} (this follows the commutation rule for coefficients given by Equation~\eqref{COMM2}). In order to complete the proof of the lemma we have to prove that the first term vanishes.

In other words, we consider
\begin{multline} \label{ZEROPOW}
r^{\frac{\eta_1+\eta_2}{r}}\big{\langle} \mathcal{A}_{\eta_2}(z_2)\mathcal{A}_{\eta_1}(z_1) \big{\rangle}^{\bullet} =
(\mathcal{S}{(ruz_2)} )^{\frac{z_2 - \eta_2}{r}} (\mathcal{S}{(ruz_1)} )^{\frac{z_1 - \eta_1}{r}} \times \\
\times \sum_{k,l \in \mathbb{Z}} \frac{(\mathcal{S}{(ruz_2)} )^k {z_2}^k (\mathcal{S}{(ruz_1)} )^l {z_1}^l}{(\frac{z_2 - \eta_2}{r} + 1)_k (\frac{z_1 - \eta_1}{r} + 1)_l }  \big{\langle} \mathcal{E}_{kr - \eta_2 }(uz_2) \mathcal{E}_{lr - \eta_1 }(uz_1)  \big{\rangle}^{\bullet}
\end{multline}
We want to show that the coefficient of $u^0$ in this expression does not 
contain terms of expansion in $z_1,z_2$ that have positive degrees in both variables.
This implies directly that $\cor{\A_{\eta_2,+}(z_2)\A_{\eta_1,+}(z_1)}^{\bullet}_0=0$.

There are two cases: 

\begin{itemize}
\item $k r - \eta_2 = k r - \eta_2 = 0$, which implies
$k = l = \eta_1 = \eta_2 = 0$. In this case the expression \eqref{ZEROPOW} is equal to
\begin{align}
  \label{eq:zero-expr}
  \frac{\mathcal{S}{(ruz_2)}^{\frac{z_2}{r}} \mathcal{S}{(ruz_1)}^{\frac{z_1}{r}}}{\zeta(u z_2) \zeta(u z_1)}
  = \frac{1}{u^2 z_1 z_2} + \frac{1}{24 z_1 z_2} \lb r z_1^3 + r z_2^3 - z_1^2 - z_2^2 \rb + O(u^2),
\end{align}
hence all terms in the coefficient of $u^0$ have negative degree either in $z_1$ or in $z_2$.

\item $k r - \eta_2 \neq 0$ and $l r - \eta_1 \neq 0$, which implies $k r - \eta_2+l r - \eta_1 =0$. In this case all factors are formal power series in $u$, so we can expand all factors in $u$ up to $O(u^1)$.
  The summand with particular $k$ and $l$ in \eqref{ZEROPOW}  is equal to
  \begin{align}
    \frac{z_2^k z_1^l}{(\frac{z_2 - \eta_2}{r} + 1)_k (\frac{z_1 - \eta_1}{r} + 1)_l} + O(u)
  \end{align}
The condition $k r - \eta_2+l r - \eta_1 =0$ is satisfied in one of the two possible cases:
  \begin{itemize}
  \item[--] $\eta_1=\eta_2 = 0$, $k + l = 0$, $k,l\not=0$;
  \item[--] $\eta_1 + \eta_2 = r$, $k + l = 1$.
  \end{itemize}
  In both cases either $k$ or $l$ is non-positive.  Without loss of generality, let's assume that $l \leq 0$ (the other case is symmetric). Then
  \begin{align}
    \frac{z_1^l}{(\frac{z_1 - \eta_1}{r} + 1)_l} = z_1^l
    \lb \frac{z_1 - \eta_1}{r} \cdot \dots \cdot \lb \frac{z_1 - \eta_1}{r} + l + 1 \rb \rb
  \end{align}
  contains no positive powers of $z_1$.
\end{itemize}
\end{proof}

\subsection{Vacuum expectations without unstable terms} In this section we give a formula for the disconnected vacuum expectations, where all unstable terms, that is, $\ccor{\A_{\eta}(z)}_{-1}$ and $\ccor{\A_{\eta_1}(z_1)\A_{\eta_2}(z_2)}_{0}$, are dropped. This is a straightforward generalization of the similar formula in~\cite{DKOSS}, which is based on the following simple recursion rules:

\begin{lem}We can recursively decompose disconnected correlators as follows:
\begin{align}\label{REC1}
& \langle \A_{\eta}(z) \prod_i \A_{\eta_{i}}(z_{i}) \rangle^{\bullet}_k  = 
\langle \A_{\eta}(z)\rangle^{\circ}_{-1} \langle  \prod_i \A_{\eta_{i}}(z_{i}) \rangle^{\bullet}_{k+1} 
+\langle \A_{\eta,+}(z) \prod_i \mathcal{A}_{\eta_i}(z_{i}) \rangle_k^{\bullet}
\\
\label{REC2}
&\langle \A_{\eta,+}(z) \A_{\sigma}(w) \prod_i \A_{\eta_{i}}(z_{i}) \rangle_k^{\bullet} = 
\langle \A_{\eta}(z) \A_{\sigma}(w)\rangle^{\circ}_{0} \langle \prod_i \A_{\eta_{i}}(z_{i}) \rangle_k^{\bullet} 
\\ \notag
 &\phantom{\langle \A_{\eta,+}(z) \A_{\sigma}(w) \prod_i \A_{\eta_{i}}(z_{i}) \rangle_k^{\bullet} = }
 +
\langle \A_{\sigma}(w)  \A_{\eta,+}(z) \prod_i \A_{\eta_{i}}(z_{i}) \rangle_k^{\bullet}
\end{align}
\end{lem}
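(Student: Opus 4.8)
The plan is to derive both recursions by isolating the leftmost operator and feeding in the facts already established for the unstable correlators, \eqref{1pt} and \eqref{2pt}, together with the commutator formula \eqref{COMM2}; no fresh vacuum-expectation computation is required beyond what the preceding lemmas supply. The underlying mechanism is the same in both cases: one splits off either the non-positive-power part of $\A_\eta(z)$ (via its action on the covacuum) or commutes $\A_{\eta,+}(z)$ past its right neighbour, and in each case the resulting ``defect'' is exactly the relevant genus-zero correlator.

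For \eqref{REC1} I would start from the action of $\A_\eta(z)$ on the covacuum recorded in \eqref{trikkone}, namely $\langle 0|\A_\eta(z) = \tfrac{\deeta}{uz}\langle 0| + \langle 0|\A_{\eta,+}(z)$. Substituting this into $\langle 0|\A_\eta(z)\prod_i\A_{\eta_i}(z_i)|0\rangle$ splits the disconnected correlator into a singular piece $\tfrac{\deeta}{uz}\,\langle \prod_i\A_{\eta_i}(z_i)\rangle^\bullet$ and the piece $\langle \A_{\eta,+}(z)\prod_i\A_{\eta_i}(z_i)\rangle^\bullet$. Extracting the coefficient of $u^k$ from the singular piece lowers the $u$-grading by one, since $[u^k]\big(\tfrac{\deeta}{uz}\,\langle \prod_i\A_{\eta_i}(z_i)\rangle^\bullet\big) = \tfrac{\deeta}{z}\,\langle \prod_i\A_{\eta_i}(z_i)\rangle^\bullet_{k+1}$. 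Finally I would identify $\tfrac{\deeta}{z}$ with $\langle\A_\eta(z)\rangle^\circ_{-1}$ through \eqref{1pt} (connected and disconnected one-point correlators coincide), which yields \eqref{REC1}.

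For \eqref{REC2} I would commute $\A_{\eta,+}(z)$ through $\A_\sigma(w)$, writing $\A_{\eta,+}(z)\A_\sigma(w) = \A_\sigma(w)\A_{\eta,+}(z) + [\A_{\eta,+}(z),\A_\sigma(w)]$. By \eqref{COMM2} every commutator of coefficients $[\A_{\eta}^{(k)},\A_{\sigma}^{(l)}]$ is a scalar, so $[\A_{\eta,+}(z),\A_\sigma(w)]$ is a $c$-number series and, crucially, is independent of $u$; it therefore factors out of the correlator and commutes with the extraction of $[u^k]$, producing $[\A_{\eta,+}(z),\A_\sigma(w)]\cdot\langle\prod_i\A_{\eta_i}(z_i)\rangle^\bullet_k$ plus the reordered term $\langle\A_\sigma(w)\A_{\eta,+}(z)\prod_i\A_{\eta_i}(z_i)\rangle^\bullet_k$. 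It then remains to verify that the scalar $[\A_{\eta,+}(z),\A_\sigma(w)]$ equals $\langle\A_\eta(z)\A_\sigma(w)\rangle^\circ_0$: summing the nonvanishing commutators over $k\geq 1$ with $k+l=1$ gives a geometric series $\sum_{k\geq 1}(-z/w)^k$ which, after one re-indexing, I would match term-by-term against the $z\sum_{k\geq 0}(-z/w)^k$ appearing in \eqref{2pt}.

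The bulk of this is routine bookkeeping; the one point that genuinely needs care is the final matching in \eqref{REC2}, where the truncation $k\geq 1$ of $\A_{\eta,+}$ must be tracked against the $\delta_{k+l-1}$ in \eqref{COMM2} and the resulting series re-indexed so as to coincide with the genus-zero two-point correlator. The conceptual ingredient that makes the whole scheme go through is that the commutator in \eqref{COMM2} is a $u$-independent scalar, so that extracting the $u^k$-coefficient distributes cleanly across the commutator term.
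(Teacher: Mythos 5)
Your proposal is correct and follows essentially the same route as the paper: the paper's (very terse) proof likewise derives \eqref{REC1} from the covacuum decomposition \eqref{trikkone} together with \eqref{1pt}, and derives \eqref{REC2} by commuting $\A_{\eta,+}(z)$ past $\A_{\sigma}(w)$ and identifying the resulting $u$-independent scalar commutator with the genus-zero two-point correlator \eqref{2pt}, exactly as you do. Your only slip is cosmetic: the intermediate geometric series is $-w\sum_{k\geq 1}(-z/w)^k$ rather than $\sum_{k\geq 1}(-z/w)^k$, but this re-indexes to $z\sum_{k\geq 0}(-z/w)^k$ just as you indicate.
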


\begin{proof}
Equation~\eqref{trikkone} and the formula for the one-point correlator~\eqref{1pt} together prove the first equality.
 The second equality follows from the computation of the two-points correlator~\eqref{2pt}.
\end{proof}

This implies the following proposition.

\begin{prop} We have:
\begin{align}\label{STAB}
&
 \cor{\A_{\eta_1,+}(z_1)\dots\A_{\eta_k,+}(z_n)}_k^{\bullet} \\ \notag &
 =\sum_{y \in \mathcal{Y}^{stab}_{n,k}}\prod_{i=1}^{h(y)}\ccor{\A_{\eta_{c_{i,1}(y)}}(z_{c_{i,1}(y)})\dots\A_{\eta_{c_{i,l_i(y)}(y)}}(z_{c_{i,l_i(y)}(y)})}_{\lambda_i(y)}.
\end{align}
where \begin{equation}
\mathcal{Y}^{stab}_{n,k} = \bs{y\in\mathcal{Y}_{n,k}\big|\, l_i(y)=1 \Rightarrow \lambda_i(y)\neq -1,\; l_i(y)=2 \Rightarrow \lambda_i(y)\neq 0}
\end{equation}
In other words, $\cor{\A_{\eta_1,+}(z_1)\cdots\A_{\eta_n,+}(z_n}_k^{\bullet}$ is equal to $\cor{\A_{\eta_1}(z_1)\cdots\A_{\eta_n}(z_n)}_k^{\bullet}$ with all the unstable terms dropped.
\end{prop}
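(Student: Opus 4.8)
The plan is to prove, by induction, a statement slightly more general than \eqref{STAB}. For $0\le p\le n$ consider the mixed disconnected correlator $\cor{\A_{\eta_1,+}(z_1)\cdots\A_{\eta_p,+}(z_p)\A_{\eta_{p+1}}(z_{p+1})\cdots\A_{\eta_n}(z_n)}^\bullet_k$, in which the first $p$ operators are taken in their positive part and the rest are full. I claim it equals the inclusion--exclusion sum \eqref{INCEXC} restricted to those tableaux in which every row contained in $\{1,\dots,p\}$ is \emph{stable}, i.e.\ satisfies $l_i(y)=1\Rightarrow\lambda_i(y)\neq-1$ and $l_i(y)=2\Rightarrow\lambda_i(y)\neq0$. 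For $p=0$ there is no constraint and this is exactly \eqref{INCEXC}; for $p=n$ every row is forced to be stable, which is \eqref{STAB}. So it suffices to pass from $p$ to $p+1$, and I run an outer induction on $n$ so that the inductive hypothesis may be applied to correlators with fewer operators.

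For the step $p\to p+1$ I convert the single operator $\A_{\eta_{p+1}}(z_{p+1})$ into its positive part. First I use \eqref{REC2} repeatedly to commute it leftward past each of the already-converted operators $\A_{\eta_p,+}(z_p),\dots,\A_{\eta_1,+}(z_1)$: each such swap is governed by the two-point computation and produces the unstable factor $\ccor{\A_{\eta_i}(z_i)\A_{\eta_{p+1}}(z_{p+1})}_0$ (in ascending order $i<p+1$) times the correlator with the indices $i$ and $p+1$ deleted. Once $\A_{\eta_{p+1}}(z_{p+1})$ is leftmost I apply \eqref{REC1}: it splits off the unstable one-point factor $\ccor{\A_{\eta_{p+1}}(z_{p+1})}_{-1}$ times the (degree-shifted) correlator with $p+1$ deleted, and leaves $\A_{\eta_{p+1},+}(z_{p+1})$, which I commute back into its place for free, since by \eqref{COMM2} the commutator of two positive-part operators is supported on $k+l=1$ and hence vanishes.

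Thus the correlator $C_p$ with positive set $\{1,\dots,p\}$ equals the correlator $C_{p+1}$ with positive set $\{1,\dots,p+1\}$ plus exactly the peeled terms, which are the singleton $\{p+1\}$ with $\lambda=-1$ and the doubletons $\{i,p+1\}$, $i\le p$, with $\lambda=0$ --- that is, precisely the unstable rows contained in $\{1,\dots,p+1\}$ but not in $\{1,\dots,p\}$, namely those whose maximal element is $p+1$. Each peeled factor multiplies a mixed correlator on the remaining indices whose positive operators again form an initial block (the positive set is $\{1,\dots,p\}$ or $\{1,\dots,p\}\setminus\{i\}$, but the positive operators still come first in the ordering), which is exactly the shape to which the inductive hypothesis applies. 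Comparing with the recursive decomposition of the restricted tableaux sum obtained by extracting the row through $p+1$, the subtraction of these peeled terms upgrades the constraint from ``rows inside $\{1,\dots,p\}$ are stable'' to ``rows inside $\{1,\dots,p+1\}$ are stable,'' which is the statement for $p+1$.

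The main difficulty I expect is not analytic but combinatorial bookkeeping. I must verify that the terms peeled off by \eqref{REC1} and \eqref{REC2} are in exact bijection with the newly forbidden unstable rows --- with the correct labels, the correct ascending order inside each two-point factor (here the non-symmetry of \eqref{2pt} matters), and no multiplicity errors --- and that the deletions leave the remaining correlator in the initial-block form required by the inductive hypothesis, consistent with the ascending-within-rows and equal-length-row ordering conventions defining $\mathcal{Y}_{n,k}$. Tracking the sign $(-1)^l$ and the support condition $k+l=1$ in \eqref{COMM2}, together with the explicit values \eqref{1pt}--\eqref{2pt}, is exactly what makes this matching work.
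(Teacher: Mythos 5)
Your proof is correct and takes essentially the same approach as the paper's: the paper's proof (which defers to Proposition~2.21 of \cite{DKOSS}) is precisely the recursion based on Equations~\eqref{REC1} and~\eqref{REC2} that you spell out, with your double induction, the inclusion--exclusion matching, and the commutativity of the $\A_+$-operators supplying the same bookkeeping. The one implicit extension you make --- applying~\eqref{REC2} with the operator pair in the middle of a correlator rather than at the left end --- is harmless, since by~\eqref{COMM2} the commutator $[\A_{\eta,+}(z),\A_{\sigma}(w)]$ is a scalar multiple of the identity, so the swap can be performed at any position.
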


\begin{proof}
	The proof of this proposition is completely analogous to the proof of Proposition~2.21 in~\cite{DKOSS}. It is based on the recursion that expresses 
	$ \langle \A_{\eta_1}(z_1) \cdots \A_{\eta_n}(z_n) \rangle_k^{\bullet}$ in terms of the unstable vacuum expectations and $\A_+$-operators using only Equations~\eqref{REC1} and~\eqref{REC2}. Though the operators here are more general, the recursion rules are still the same, so the same argument can be applied.
\end{proof}

\begin{rmk} Let us give some example to explain the difference between $\mathcal{Y}_{n,k}$ and $\mathcal{Y}^{stab}_{n,k}$. The first example below belongs to $\mathcal{Y}_{3,-1}$ but not to $\mathcal{Y}^{stab}_{3,-1}$, and the second one belongs to $\mathcal{Y}^{stab}_{3,-1}$:
\begin{equation*}
  \mathpic{[scale=0.5]
    \node [align=center] at (1 + 1,0 + 2) {Unstable};
    \begin{youngTableau}[begin x = 0]
      \WCell{0} \cell{1}\cell{2} \break
      \WCell{-1} \cell{3}
    \end{youngTableau}
    \pgftransformshift{\pgfpoint{5 cm}{0}}
    \node [align=center] at (1 + 1,0 + 2) {Stable};
    \begin{youngTableau}[begin x = 0]
      \WCell{-1} \cell{1}\cell{2} \break
      \WCell{0} \cell{3}
    \end{youngTableau}
  }
\end{equation*}
\end{rmk}


\subsection{Polynomiality} In this section we prove the quasi-polynomiality property for orbifold Hurwitz numbers. First, we show that ${\cor{\A_{\eta_1,+}(z_1)\dots\A_{\eta_n,+}(z_n)}_k^{\bullet}}/({z_1 \cdots z_n})$ is a symmetric polynomial in $z_1,\dots,z_n$ (excluding unstable cases of $k=-1,n=1$, and $k=0, n=2$). This implies that $\ccor{\A_{\eta_1}(z_1)\dots\A_{\eta_n}(z_n)}_k/({z_1 \cdots z_n})$ is a symmetric polynomial in $z_1,\dots,z_n$ (again, excluding unstable cases). This, in turn, implies quasi-polynomiality of orbifold Hurwitz numbers.

\begin{prop}
\label{Aplusprop}
The function
\begin{equation}
\displaystyle\dfrac{\cor{\A_{\eta_1,+}(z_1)\dots\A_{\eta_n,+}(z_n)}_k^{\bullet}}{z_1 \cdots z_n}
\end{equation}
is a symmetric polynomial in $z_1,\dots,z_n$ for $(n,k)\not=(1,-1),(2,0)$ .
\end{prop}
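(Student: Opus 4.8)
The plan is to verify separately the three features hidden in the phrase ``symmetric polynomial'': symmetry, absence of negative powers, and boundedness of degree. The first two are immediate and the third carries all the weight. By \eqref{COMM2} the commutator $[\A_{\eta_i}^{(k)},\A_{\eta_j}^{(l)}]$ is supported on $k+l=1$, whereas by \eqref{POSPOW} the operator $\A_{\eta,+}(z)=\sum_{k\geq 1}\A_\eta^{(k)}z^k$ only involves indices $k\geq 1$; hence $[\A_{\eta_i,+}(z_i),\A_{\eta_j,+}(z_j)]$ is a sum over $k,l\geq 1$, i.e.\ over $k+l\geq 2$, and vanishes. Thus the $\A_{\eta_i,+}(z_i)$ commute, so $\cor{\prod_i\A_{\eta_i,+}(z_i)}^{\bullet}$ is invariant under simultaneous permutation of the pairs $(z_i,\eta_i)$, which is the desired symmetry; and since each factor carries only strictly positive powers of $z_i$, division by $z_1\cdots z_n$ yields a power series in non-negative powers. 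The two excluded pairs $(n,k)=(1,-1),(2,0)$ are exactly the unstable correlators of \eqref{1pt}--\eqref{2pt}; for the $\A_{+}$-projected correlators they in fact vanish (this is already the content of \eqref{trikkone} for $n=1$ and of the vanishing proved around \eqref{ZEROPOW} for $n=2$), so the statement is trivial there and these cases are set aside.

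For the degree bound I would insert \eqref{eq:definition-of-A-operators} and reduce the disconnected correlator to vacuum expectations of $\cE$-operators:
\begin{equation*}
\cor{\prod_i\A_{\eta_i}(z_i)}^{\bullet}=\prod_i r^{-\eta_i/r}\mathcal{S}(ruz_i)^{\frac{z_i-\eta_i}{r}}\sum_{\vec\kappa}\prod_i\frac{\mathcal{S}(ruz_i)^{\kappa_i}z_i^{\kappa_i}}{(\frac{z_i-\eta_i}{r}+1)_{\kappa_i}}\cor{\prod_i\cE_{\kappa_ir-\eta_i}(uz_i)}^{\bullet},
\end{equation*}
and compute the $\cE$-correlators by iterating the commutation rule \eqref{COMME} down to $\cE_0$-terms, so that each contribution is a product of factors $\zeta$ of energy-weighted combinations of the $uz_i$ together with one factor $1/\zeta(u\sum z_i)$ per connected block. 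Under the grading $\deg z_i=+1$, $\deg u=-1$, every such $\cE$-correlator is homogeneous of weight $0$, since the $z_i$ enter only through $w_i=uz_i$; the same holds for the $\mathcal{S}$-factors, up to the weight-raising caused by the explicit $z_i$ in their exponents $\tfrac{z_i-\eta_i}{r}+\kappa_i$, whose contribution to the $z$-degree is at most half of the accompanying $u$-degree. Consequently, at a fixed power $u^k$, all the $z$-degree produced by these factors is bounded in terms of $k$.

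The only uncompensated source of $z$-degree is then the explicit $z_i^{\kappa_i}$ together with the expansion of the Pochhammer denominator. Energy conservation fixes $\sum_i\kappa_i$ within each connected block, but the individual $\kappa_i$ still range over infinitely many values, so at fixed $u^k$ each summand is a power series in the $z_i$ of unbounded degree, and boundedness can only come from cancellation in the sum over $\vec\kappa$. I would control this using two features already visible in the excerpt: the Pochhammer symbol forces $z_i^{\kappa_i}/(\tfrac{z_i-\eta_i}{r}+1)_{\kappa_i}$ to carry no positive power of $z_i$ whenever $\kappa_i\leq 0$ (exactly the computation made after \eqref{ZEROPOW}), while at fixed $u^k$ the $\cE$-correlator coefficients are polynomials in the energies $\kappa_ir-\eta_i$ of degree bounded in terms of $k$. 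Combining these should collapse the $\vec\kappa$-sum to a finite one and annihilate all monomials of sufficiently high total degree.

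Establishing this truncation uniformly---that the high-degree coefficients genuinely cancel once $(n,k)$ avoids the two unstable values---is the technical heart of the argument and the step I expect to be hardest. It is the $r$-orbifold analogue of the corresponding computation in \cite{DKOSS}, and its breakdown for $(n,k)=(1,-1),(2,0)$ is precisely the appearance of the polar term $1/z$ in \eqref{1pt} and of the geometric series in \eqref{2pt}. Once polynomiality of $\cor{\A_{\eta_1,+}(z_1)\dots\A_{\eta_n,+}(z_n)}_k^{\bullet}/(z_1\cdots z_n)$ is in hand, the corresponding statement for the connected correlators follows by induction on $n$ via the stable decomposition \eqref{STAB}, isolating the single fully-connected row and absorbing the shorter rows as polynomials by the inductive hypothesis.
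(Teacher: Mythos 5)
Your steps (i) and (ii) --- non-negativity of the powers after dividing by $z_1\cdots z_n$ via \eqref{POSPOW}, and symmetry via the commutativity \eqref{COMM2} of the $\A_{+}$-operators --- are exactly the paper's, and your side remark that the excluded cases $(n,k)=(1,-1),(2,0)$ in fact give vanishing $+$-projected correlators is correct. But the boundedness of the degree, which is the entire substance of the proposition, is not proved in your proposal: you reduce it to a hoped-for cancellation in the sum over $\vec\kappa$ and then explicitly set that step aside as ``the technical heart of the argument and the step I expect to be hardest.'' That is a genuine gap, not a deferred detail; and, more importantly, the cancellation you postulate is not what makes the statement true.

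The paper (following Proposition 9 of \cite{OP06}) uses the symmetry already established in step (ii) to reduce the degree bound to a \emph{single} variable: it suffices to bound the degree in $z_n$, the variable of the operator adjacent to the vacuum. Acting on $|0\rangle$ annihilates every $\cE$ of positive energy, so in \eqref{eq:action-a-vacuum} only non-positive summation indices survive, and for those the factor $z^{-k}/\bigl(\tfrac{z-\eta}{r}+1\bigr)_{-k}$ contains no positive powers of $z$ at all --- exactly the Pochhammer computation you quote from after \eqref{ZEROPOW}. Each remaining factor, namely $\mS(ruz)^{(z-\eta)/r}$, $\mS(ruz)^{-k}$, and the matrix elements of $\cE_{-kr-\eta}(uz)$, is a series in $u$ with at most a simple pole whose coefficient of $u^m$ has $z$-degree at most $2m$ (resp.\ $m$, $m$); hence at any fixed power of $u$ every summand \emph{separately} has bounded degree in $z_n$, and no interference between different summation indices is needed. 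Your route stalls precisely because you expand all $n$ operators into $\cE$-correlators simultaneously: the intermediate indices $\kappa_i$ then run over both signs, positive $\kappa_i$ produce unbounded powers of $z_i$ summand by summand, and you are forced to invoke a cancellation you cannot verify. The missing idea is to use symmetry first and the vacuum only once, on the rightmost operator. (Your closing paragraph, deducing polynomiality of the connected correlators by induction via \eqref{STAB}, is not part of this proposition at all; it is the paper's Proposition~\ref{Praconpol}.)
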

\begin{proof}
We follow the proof of Proposition 9 in~\cite{OP06}. We have:
\begin{enumerate}
\item[i)] {Boundedness from below:} $\cor{\A_{\eta_1,+}(z_1)\dots\A_{\eta_n,+}(z_n)}_k^{\bullet}$ has strictly positive powers in all its variables $z_1,\dots,z_n$, as it follows from the definition of $\A_{\eta,+}(z)$ given by Equation~\eqref{POSPOW}. So, we can divide by $\prod_{i=1}^n z_i$, and we still have only non-negative powers of $z_1,\dots,z_n$ in the expansion of the quotient.
\item[ii)] Symmetry  holds because $\A_+$ operators commute which each other, which is a direct consequence of the commutation formula~\eqref{COMM2}.
\item[iii)] {Boundedness from above:} Since it is a symmetric function, it is enough to show that the power of $z_n$ is bounded. From the definition of the $\mathcal{A}$-operator we have the following
\begin{align}\label{eq:action-a-vacuum}
\A_{\eta}(z) |0\rangle=
r^{-\eta/r} \mS (ruz)^{(z-\eta)/r} \sum_{k = 0}^\infty
\frac{\mS(ruz)^{-k} z^{-k}}{\phm{\frac{z - \eta}{r} + 1}{-k}} \cE_{-k r - \eta} (u z) |0\rangle
\end{align}
where we used the change of summation index $k \mapsto -k$ since the operators $\mathcal{E}_i$ with positive $i$ annihilate the vacuum.

Since each factor in each summand in~\eqref{eq:action-a-vacuum} has at most first order pole in $u$, it is sufficient to do the following. We expand  each factor of each summand in~\eqref{eq:action-a-vacuum} in $u$ up to $O(u^{m+1})$, and we show that the degree of $z$ in this expansion is bounded from above. Indeed, in this case, the highest possible power of $z$ in $\cE_{-k r - \eta} (u z)$ is $m$; in $\mS(ruz)^{-k}$ it is again $m$; in $\mS (ruz)^{(z-\eta)/r}$ is it equal to $2m$ (one $m$ comes from argument of $\mS$, while
the other estimates power of $z$ in the binomial coefficient in the expansion of $(1+x)^{(z-\eta)/r}$; finally, the highest possible power of $z$ in 
${z^{-k}}/{\phm{\frac{z - \eta}{r} + 1}{-k}}$ is equal to $0$.
\end{enumerate}
\end{proof}

\begin{prop}
\label{Praconpol}
For $(n,k)\not=(1,-1),(2,0)$, the function
\begin{equation}
\displaystyle\dfrac{\ccor{\A_{\eta_1}(z_1)\dots\A_{\eta_n}(z_n)}_k}{z_1 \cdots z_n}
\end{equation}
is a symmetric polynomial in $z_1,\dots,z_n$.
\end{prop}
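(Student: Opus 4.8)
The plan is to prove the statement by induction on $n$, carrying along both polynomiality and symmetry, and to use Equation~\eqref{STAB} to extract the connected correlator from the already-understood disconnected $\A_+$-correlator of Proposition~\ref{Aplusprop}. For the base case $n=1$ the connected and disconnected correlators agree, and Equation~\eqref{DEVA} (equivalently~\eqref{1pt}) shows that the only $z$-singular contribution to $\cor{\A_\eta(z)}$ sits in its $u^{-1}$-coefficient; hence for every $k\neq -1$ the coefficient $\ccor{\A_\eta(z)}_k$ is divisible by $z$, so $\ccor{\A_\eta(z)}_k/z$ is a polynomial, and symmetry is vacuous.

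For the inductive step, fix a stable pair $(n,k)$ with $n\geq 2$ and assume the statement for all smaller numbers of variables. In the sum~\eqref{STAB} over $y\in\mathcal{Y}^{stab}_{n,k}$, the unique one-row tableau (which lies in $\mathcal{Y}^{stab}_{n,k}$ precisely because $(n,k)\neq(1,-1),(2,0)$) contributes exactly $\ccor{\A_{\eta_1}(z_1)\cdots\A_{\eta_n}(z_n)}_k$. Solving for this term gives
\begin{equation*}
\ccor{\A_{\eta_1}(z_1)\cdots\A_{\eta_n}(z_n)}_k
= \cor{\A_{\eta_1,+}(z_1)\cdots\A_{\eta_n,+}(z_n)}^{\bullet}_k
- \sum_{\substack{y\in\mathcal{Y}^{stab}_{n,k}\\ h(y)\geq 2}}\ \prod_{i=1}^{h(y)}
\ccor{\A_{\eta_{c_{i,1}(y)}}(z_{c_{i,1}(y)})\cdots}_{\lambda_i(y)}.
\end{equation*}
Dividing by $z_1\cdots z_n$, the first term on the right is a symmetric polynomial by Proposition~\ref{Aplusprop}. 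In each summand of the second term every row has length $l_i(y)<n$ (since $h(y)\geq 2$) and is itself stable --- this is exactly what the constraints defining $\mathcal{Y}^{stab}_{n,k}$ guarantee, namely $l_i=1\Rightarrow\lambda_i\neq-1$ and $l_i=2\Rightarrow\lambda_i\neq0$ --- so the induction hypothesis applies to each factor after dividing by its own variables.

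Consequently each factor $\ccor{\cdots}_{\lambda_i(y)}$ divided by the product of its variables is a symmetric polynomial, so each product over rows is a polynomial and the whole second term, divided by $z_1\cdots z_n$, is a polynomial; hence so is the connected correlator divided by $z_1\cdots z_n$. For symmetry, observe that the ordering conventions built into $\mathcal{Y}^{stab}_{n,k}$ (ascending entries within a row, ascending first entries among rows of equal length) merely select one representative per set partition of $\{1,\dots,n\}$. Since by the induction hypothesis each row-correlator is symmetric (i.e.\ invariant under permuting the pairs $(z_j,\eta_j)$ within its row), the second sum, after dividing by $z_1\cdots z_n$, reassembles into a sum over set partitions into at least two blocks weighted by products of symmetric polynomials, which is manifestly invariant under permuting the pairs $(z_i,\eta_i)$. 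As the first term is symmetric by Proposition~\ref{Aplusprop}, the difference is a symmetric polynomial, completing the induction.

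The main obstacle is the symmetry claim rather than the polynomiality: polynomiality falls out almost immediately from \eqref{STAB} together with Proposition~\ref{Aplusprop}, whereas establishing symmetry requires recognizing that the ordered Young-tableau bookkeeping collapses into a permutation-invariant sum over set partitions, which is only legitimate once the smaller connected correlators are already known to be symmetric. As an alternative route to symmetry one can argue directly from the commutation formula \eqref{COMM}--\eqref{COMM2}: exchanging two adjacent operators alters any disconnected correlator (and hence, through the inclusion--exclusion relation~\eqref{INCEXC}, any connected one) only by a term proportional to the formal delta function $z_iz_j\delta(z_i,-z_j)$, which contributes nothing to the polynomial part, so the polynomial produced above cannot depend on the chosen ordering.
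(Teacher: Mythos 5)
Your proof is correct and takes essentially the same route as the paper: induction on $n$, isolating the unique single-row tableau in Equation~\eqref{STAB} so that the connected correlator equals the $\A_+$-correlator handled by Proposition~\ref{Aplusprop} minus a sum over multi-row stable tableaux, each factor of which is covered by the induction hypothesis. The only difference is that you make explicit two points the paper leaves implicit --- the $n=1$ base case via the one-point computation and the fact that the ordered-tableau sum reassembles into a permutation-invariant sum over set partitions --- which is a welcome elaboration rather than a departure.
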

\begin{proof}
We follow the proof of Proposition 2.23 in~\cite{DKOSS}. 
We prove the statement by induction on the number of operators in the vacuum expectation. It holds for $n=1$. Suppose it holds for vacuum expectations with any number of operators less than $n$, and we want to prove it for $n$ operators as well.
Let $y'$ be the single-row Young tableau. Consider the partition 
$\mathcal{Y}^{stab}_{n,k} = \{y'\} \cup \left( \mathcal{Y}^{stab}_{n,k} \setminus \{y'\} \right) $.
 Then Equation~\eqref{STAB} implies:
\begin{align}
 &\dfrac{\ccor{\A_{\eta_1}(z_1)\dots\A_{\eta_n}(z_n)}_k}{z_1 \cdots z_n} =
 \dfrac{\cor{\A_{\eta_1,+}(z_1)\dots\A_{\eta_n,+}(z_n)}_k^{\bullet}}{z_1 \cdots z_n}\\ \nonumber
 &-\sum_{y \in \mathcal{Y}^{stab}_{n,k} \setminus \{y'\}}\prod_{i=1}^{h(y)}\dfrac{\ccor{\A_{\eta_{z_{c_{i,1}(y)}}}(z_{c_{i,1}(y)})\dots\A_{\eta_{c_{i,l_i(y)}(y)}}(z_{c_{i,l_i(y)}(y)})}_{\lambda_i(y)}}{z_{c_{i,1}(y)} \cdots z_{c_{i,l_i(y)}(y)}} .
\end{align}
The first term on the right hand side is symmetric polynomial in $z_1,\dots,z_n$ by Proposition~\ref{Aplusprop}, the second term is symmetric polynomial by induction hypothesis.
\end{proof}

Now we are ready to prove the quasi-polynomiality of orbifold Hurwitz numbers.
 \begin{teo} 
 	\label{ThPol}
The orbifold Hurwitz numbers $h^{\circ,[r]}_{g,\mu}$ for $(g,n)\not={(0,1),(0,2)}$ can be expressed as follows:
\begin{equation}\label{eq:orbifold-hurwitz-poly}
h^{\circ,[r]}_{g;\mu} = (2g-2+l(\mu) +|\mu|/r)!\br{\prod_{i=1}^n\dfrac{\mu_i^{\floor*{\frac{\mu_i}{r}}}}{\floor*{\frac{\mu_i}{r}}!}}\, P_{g,n}^{\langle \frac{\vec{\mu}}{r} \rangle}(\mu_1,\dots,\mu_n),
\end{equation}
where $P_{g,n}^{\vec\epsilon}(\mu_1,\dots,\mu_n)$ are some
polynomials in $\mu_1,\dots,\mu_n$, whose coefficients depend on the parameters $\vec\epsilon=(\epsilon_1,\dots,\epsilon_n)$, $0\leq \epsilon_1,\dots,\epsilon_n\leq r-1$
(we have $\epsilon_i=\langle\frac{\mu_i}{r}\rangle$).
\end{teo}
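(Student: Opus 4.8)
The plan is to read off the theorem directly from the two structural results already in place: the expression of the connected generating function through connected $\A$-correlators (Proposition~\ref{CONNGEN}) and the symmetric-polynomiality of the normalized correlator coefficients (Proposition~\ref{Praconpol}). No new hard analysis is needed; the content is a careful tracking of $u$-gradings together with a matching of the combinatorial prefactors.

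First I would extract the coefficient of $u^b$ from both sides of Equation~\eqref{CONNGENGEN}, where $b=2g-2+l(\mu)+|\mu|/r$. On the left, $[u^b]H^{\circ,[r]}$ equals $h^{\circ,[r]}_{g,\mu}/b!$ by the definition of the generating function (and, for fixed $\vec\mu$, the value of $b$ determines $g$ uniquely, so the extraction isolates a single genus). On the right the prefactor already carries the monomial $\prod_i u^{\mu_i/r}=u^{|\mu|/r}$, so extracting $u^b$ from the whole product amounts to extracting $u^{\,b-|\mu|/r}=u^{\,2g-2+l(\mu)}$ from the correlator. Writing $n=l(\mu)$ and $k=2g-2+n$, this is exactly $\ccor{\prod_i\A_{\eta_i}(\mu_i)}_k$ with $\eta_i=r\langle\mu_i/r\rangle$.

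Next I would apply Proposition~\ref{Praconpol}: for $(n,k)\neq(1,-1),(2,0)$ the quotient $\ccor{\prod_i\A_{\eta_i}(z_i)}_k/(z_1\cdots z_n)$ is a symmetric polynomial in $z_1,\dots,z_n$, with coefficients depending only on $\eta_1,\dots,\eta_n$. Specializing $z_i=\mu_i$ then gives $\ccor{\prod_i\A_{\eta_i}(\mu_i)}_k=\mu_1\cdots\mu_n\cdot Q(\mu_1,\dots,\mu_n)$ for a symmetric polynomial $Q$ whose coefficients depend only on the residues $\eta_i/r=\langle\mu_i/r\rangle$. The crucial consistency check is that, via $k=2g-2+n$, the excluded pairs $(n,k)=(1,-1)$ and $(2,0)$ correspond precisely to $(g,n)=(0,1)$ and $(0,2)$, which are exactly the cases excluded in the theorem.

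Finally I would substitute back and collect factors. Multiplying $\mu_1\cdots\mu_n\cdot Q$ by the prefactor $r^{\sum_i\langle\mu_i/r\rangle}\prod_i\mu_i^{\floor*{\mu_i/r}-1}/\floor*{\mu_i/r}!$ turns $\prod_i\mu_i^{\floor*{\mu_i/r}-1}$ into $\prod_i\mu_i^{\floor*{\mu_i/r}}$, and multiplying through by $b!$ gives
\[
h^{\circ,[r]}_{g,\mu}=b!\,\Bigl(\prod_{i=1}^n\frac{\mu_i^{\floor*{\frac{\mu_i}{r}}}}{\floor*{\frac{\mu_i}{r}}!}\Bigr)\,r^{\sum_i\langle\mu_i/r\rangle}\,Q(\mu_1,\dots,\mu_n).
\]
Setting $P_{g,n}^{\vec\epsilon}:=r^{\sum_i\epsilon_i}Q$ absorbs the $\vec\epsilon$-dependent constant $r^{\sum_i\langle\mu_i/r\rangle}$ into the polynomial coefficients without disturbing polynomiality in the $\mu_i$, and using $b!=(2g-2+l(\mu)+|\mu|/r)!$ reproduces Equation~\eqref{eq:orbifold-hurwitz-poly}. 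I expect no real obstacle at this stage: all the genuine difficulty was concentrated in the boundedness-from-above estimate behind Proposition~\ref{Aplusprop} (and hence Proposition~\ref{Praconpol}); what remains is bookkeeping, the only delicate point being the correct alignment of the $u$-grading $k=2g-2+n$ with the unstable exclusions.
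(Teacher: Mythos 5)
Your proposal is correct and is essentially the paper's own proof: the paper disposes of the theorem as a direct corollary of Equation~\eqref{CONNGENGEN} and Proposition~\ref{Praconpol}, and your argument simply makes explicit the bookkeeping the paper leaves implicit (the identification of the $u^b$-coefficient with the $u^{2g-2+n}$-coefficient of the correlator, the matching of the unstable exclusions $(n,k)=(1,-1),(2,0)$ with $(g,n)=(0,1),(0,2)$, and the absorption of the constant $r^{\sum_i\langle\mu_i/r\rangle}$ into $P_{g,n}^{\vec\epsilon}$).
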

\begin{proof}
This is a direct corollary of Equation~\eqref{CONNGENGEN} and Proposition~\ref{Praconpol}.
\end{proof}



\section{Topological Recursion}\label{EOTR}
In this section we recall the topological recursion of Chekhov, Eynard, and Orantin tailored for our use. For a more detailed introduction we refer to~\cite{EO,Ey}.

\begin{defin} 
A \emph{spectral curve} is a triple $(\Sigma,x,y)$, where 
$\Sigma$ is a Riemann surface (which we assume from now on to be $\C\mathbb{P}^1$) and $x,y\colon\Sigma \to \C$ are meromorphic functions, such that the zeroes of $dx$ are disjoint from the zeroes of $dy$. Moreover the zeros of $dx$ are simple: 

In a neighborhood of a point $\alpha\in\Sigma$ such that $dx(\alpha)=0$ we can define an involution $\tau_\alpha$ that preserves function $x$ (deck transformation).

Furthermore, $\Sigma\times \Sigma$ is equipped with a meromorphic symmetric 2-differential
with a second order pole on the diagonal, which is called the \emph{Bergman kernel}. In the case of $\C\mathbb{P}^1$  the Bergman kernel is unique and in a global coordinate $z$ it reads
$$
B = \frac{dz_1 dz_2}{(z_1 - z_2)^2}.
$$
\end{defin}

\begin{defin} 
	By \emph{topological recursion} we call a recursive procedure that associates to a spectral curve data $(\Sigma,x,y,B)$ a family of symmetric meromorphic differentials, called \emph{correlation differentials} $\omega_{g,n}(z_1,\dots,z_n)$ defined on $\Sigma^n$, $g\geq 0$, $n\geq 1$.
	

The first two correlation differentials are given by explicit formulas:
\begin{equation}\label{eq:initial-values}
\omega_{0,1}(z)=\frac{y\,dx}{x}
\qquad \qquad
\omega_{0,2}(z_1,z_2)=\frac{dz_1 dz_2}{(z_1-z_2)^2} 
\end{equation}
The correlation differentials $\omega_{g,n}$, $2g-2+n>0$, are given by:
 \begin{multline}
\omega_{g,n}(z_1,z_{S})=  
\sum_{\substack{\alpha\in \Sigma \\ dx(\alpha)=0}}\res_{z=\alpha}K(z_1,z) \biggr[\omega_{g-1,n+1}(z,\tau_\alpha(z),z_{S})+
\\
  \mathop{\sum_{\substack{g_1+g_2=g \\ I\sqcup J=S}}'}
\omega_{g_1,|I|+1}(z,z_I)\omega_{g_2,|J|+1}(\tau_\alpha(z),z_J)\biggr],
\end{multline}
where $S=\{2,\ldots,n\}$, and in the second sum we exclude the cases when $(g_1,|I|+1)$ or $(g_2,|J|+1)$ is equal to $(0,1)$. The \emph{recursion kernel} $K$ is defined in the vicinity of each point $\alpha$, $dx(\alpha)=0$ by the formula
\begin{equation}
K(z_1,z):= \frac{\int_z^{\tau_\alpha(z)} \omega_{0,2}(\cdot, z_1)}{2(\omega_{0,1}(\tau_\alpha(z))-\omega_{0,1}(z))}
\end{equation}
In our case ($\Sigma=\C\mathbb{P}^1$, $z$ is a global coordinate), we can use the following formula:
 \begin{equation}
K(z_1,z)
=\frac{x(z)}{2(y(\tau_\alpha(z))-y(z))x'(z)}\left(  \frac{1}{z-z_1}- \frac{1}{\tau_\alpha(z)-z_1}\right)\frac{dz_1}{dz}
\end{equation}
\end{defin}

In the stable range, $2g-2+n>0$, the correlation differentials $\omega_{g,n}$ have poles only at the zeros of $dx$. They can be expressed as the sum of their \emph{principle parts}:
\begin{equation}\label{eq:principle-parts}
\omega_{g,n}(z_1,z_S)=\sum_{\substack{\alpha\in \Sigma \\ dx(\alpha)=0}} [\omega_{g,n}(z_1,z_S)]_\alpha
\end{equation}
where by principal part $[\eta(z_1)]_\alpha$ of a $1$-form $\eta(z_1)$ we mean the projection defined as a version of Cauchy formula, where we use $B$ instead of the Cauchy kernel:
 \begin{equation}
 \left[\eta(z_1)\right]_\alpha := \res_{z = \alpha}\ \eta(z) \int^z_\alpha B(\cdot,z_1).
 \end{equation}
 
In fact, there is an equivalent way to reformulate the topological recursion. We say that the symmetric meromorphic differentials satisfy the topological recursion if they satisfy the property~\eqref{eq:principle-parts} for $2g-2+n>0$, and also solve the abstract loop equations:
\begin{align} \label{eq:linear-loop-equation}
& \omega_{g,n}(z,z_S)+\omega_{g,n}(\tau_\alpha(z),z_S)
 \text{ is holomorphic for } z\to\alpha \\ \label{eq:quadratic-loop-equation}
& \omega_{g-1,n+1}(z,\tau_\alpha(z),z_S)+\sum_{\substack{g_1+g_2=g\\I\sqcup J=S}} \omega_{g_1,1+|I|}(z,z_I)\omega_{g_2,1+|J|}(\tau_\alpha(z),z_J)
\\ \notag &
\text{is holomorphic for } z\to\alpha \text{ with at least double zero at } \alpha.
\end{align}
A proof of that can be found in~\cite{BEO,BS}.

%




\section{The Spectral Curve}
In this section we prove the spectral curve for orbifold Hurwitz numbers using the quasi-polynomiality property
 proved in Section~\ref{POLY} and the cut-and-join equation. It is important to stress that we do not use the Johnson-Pandharipande-Tseng~\cite{JPT} formula in this Section.
 
We consider the $n$-point function  for orbifold Hurwitz numbers for fixed genus $g$:
\begin{equation} \label{orb_gen_func}
H^{\circ,[r]}_{g,n}(x_1, \ldots, x_n) = \sum_{\vec{\mu}: l(\mu)=n} \frac{h^{ \circ,[r]}_{g;\mu}}{b!} ~x_1^{\mu_1} \cdots x_n^{\mu_n}.
\end{equation}
\begin{teo} 
	\label{TOPREC} Consider the correlation differentials $\omega_{g,n}$, $g\geq 0$, $n\geq 1$, for the spectral curve $(\Sigma=\C\mathbb{P}^1,z,y)$, where
\begin{equation}\label{CURVE}
x(z) = z\exp(-z^r) \qquad \text{and} \qquad y(z) = z^r
\end{equation}
in some global coordinate $z$. They have the following analytic expansion near $x_1 = x_2 = \cdots = x_n = 0$:
\begin{equation} \label{eq:orb-gen-form}
\omega_{g,n}(x_1,\dots,x_n) = \frac{\partial}{\partial x_1}\cdots \frac{\partial}{\partial x_n} H^{\circ,[r]}_{g,n}(x_1, \ldots, x_n) dx_1\otimes\cdots\otimes dx_n.
\end{equation}
for all $(g,n)\not=(0,2)$ For $(g,n)=(0,2)$ we have:
\begin{align} \label{eq:0-2-unstable-differentials}
& \omega_{0,2}(x_1,x_2) =\frac{dz(x_1)\otimes dz(x_2)}{(z(x_1)-z(x_2))^2} \\ \notag
	& = \frac{\partial}{\partial x_1}\frac{\partial}{\partial x_2} H^{\circ,[r]}_{0,2}(x_1, x_2) dx_1\otimes dx_2
	+\frac{dx_1\otimes dx_2}{(x_1-x_2)^2}.
\end{align}
\end{teo}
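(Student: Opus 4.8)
The plan is to identify the differentials defined by the right-hand side of \eqref{eq:orb-gen-form} with the topological recursion differentials of the spectral curve \eqref{CURVE}, using the reformulation of topological recursion recalled at the end of Section~\ref{EOTR}. Write $\tilde\omega_{g,n}$ for the candidate differentials given by the right-hand sides of \eqref{eq:orb-gen-form} and \eqref{eq:0-2-unstable-differentials}. By the equivalence proved in \cite{BEO,BS}, it suffices to verify: the two base cases $\omega_{0,1}$ and $\omega_{0,2}$; that for $2g-2+n>0$ each $\tilde\omega_{g,n}$ is a global meromorphic differential in the coordinate $z$ with poles only at the zeros of $dx$, so that the principal-part representation \eqref{eq:principle-parts} holds; the linear loop equation \eqref{eq:linear-loop-equation}; and the quadratic loop equation \eqref{eq:quadratic-loop-equation}. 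Following \cite{BHLM,DLN}, the analyticity is supplied by the quasi-polynomiality of Theorem~\ref{ThPol}, and the quadratic loop equation by the cut-and-join equation. First I would locate the branch points: from $x(z)=z\exp(-z^r)$ one gets $dx=\exp(-z^r)(1-rz^r)\,dz$, whose zeros are the $r$ simple points $\alpha$ with $\alpha^r=1/r$; these are disjoint from the zeros of $dy=rz^{r-1}\,dz$, so \eqref{CURVE} is an admissible spectral curve and each $\alpha$ carries a local deck involution $\tau_\alpha$.

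The base cases are direct computations. For $(0,1)$ one checks, using the one-point correlator \eqref{DEVA}, that $\frac{d}{dx}H^{\circ,[r]}_{0,1}=y/x=z^{r}/x$, which matches $\omega_{0,1}=y\,dx/x$. For $(0,2)$ one computes $H^{\circ,[r]}_{0,2}$ from the two-point correlator \eqref{2pt} and verifies that adding back the singular term $dx_1\otimes dx_2/(x_1-x_2)^2$ reconstitutes the Bergman kernel $dz_1\otimes dz_2/(z_1-z_2)^2$, which is exactly the content of \eqref{eq:0-2-unstable-differentials}.

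The analyticity step is the heart of the argument and is where Theorem~\ref{ThPol} enters. Substituting the quasi-polynomial expression for $h^{\circ,[r]}_{g;\mu}$ into \eqref{orb_gen_func}, the sum over $\vec\mu$ factorizes, up to the polynomial $P^{\vec\epsilon}_{g,n}$, into basic one-variable series of the form $\sum_{\mu}\frac{\mu^{\lfloor\mu/r\rfloor+d}}{\lfloor\mu/r\rfloor!}x^{\mu}$ with $\mu$ restricted to a fixed residue class modulo $r$; the polynomial coefficients of $P^{\vec\epsilon}_{g,n}$ are recovered by applying the Euler operator $x\frac{d}{dx}$. Using $x=z\exp(-z^r)$ together with Lagrange inversion, I would show that these basic series are rational functions of $z$ whose only poles lie at the branch points, where $1-rz^r$ vanishes. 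Hence $\tilde\omega_{g,n}$ extends to a global meromorphic differential on $(\C\mathbb{P}^1)^n$ with poles only at the zeros of $dx$, giving \eqref{eq:principle-parts}; and the parity of these basic functions under $z\mapsto\tau_\alpha(z)$ yields the linear loop equation \eqref{eq:linear-loop-equation}.

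For the quadratic loop equation I would translate the cut-and-join equation for $h^{\circ,[r]}_{g;\mu}$ into a relation among the $\tilde\omega_{g,n}$ and show that, after the change of variables $x=z\exp(-z^r)$, it becomes precisely the statement \eqref{eq:quadratic-loop-equation} that the relevant combination of $\omega$'s is holomorphic at each $\alpha$ with a double zero there. Together with the base cases, the analyticity, and the linear loop equation, this identifies $\tilde\omega_{g,n}$ with the topological recursion differentials via \cite{BEO,BS}, proving the theorem. The main obstacle I anticipate is the bookkeeping forced by the $r$ branch points and the residue labels $\langle\frac{\mu_i}{r}\rangle$: in contrast to the $r=1$ case of \cite{DKOSS}, each variable now sits on one of $r$ sheets, so both the global meromorphicity of the basic functions and the reduction of the cut-and-join equation to \eqref{eq:quadratic-loop-equation} must be controlled simultaneously at all $r$ critical points, making the analytic continuation and the residue manipulations considerably more delicate.
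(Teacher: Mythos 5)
Your plan follows the same route as the paper's proof: verify the $(0,1)$ and $(0,2)$ cases directly, use Theorem~\ref{ThPol} to show that $H^{\circ,[r]}_{g,n}$ is the expansion of a rational function of $z_1,\dots,z_n$ with poles only at the critical points of $x$ (your Lagrange-inversion computation is exactly the closed formula of \cite[Eq.~(46)]{SSZ} that the paper quotes), deduce the principal-part decomposition \eqref{eq:principle-parts}, obtain the quadratic loop equation \eqref{eq:quadratic-loop-equation} by symmetrizing the cut-and-join equation at each branch point, and conclude by the abstract-loop-equations characterization of \cite{BEO,BS}.

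However, your justification of the linear loop equation \eqref{eq:linear-loop-equation} --- ``the parity of these basic functions under $z\mapsto\tau_\alpha(z)$'' --- is a genuine gap, and as stated it is not a proof. The functions \eqref{eq:functions-on-leaves} have no definite parity under the deck transformations; what is true, and needs an argument, is that their principal parts at each critical point $p_i$ are anti-invariant. For poles of order at least $2$ this is not a formal consequence of the pole structure: in a local coordinate $w$ centered at $p_i$ the involution reads $\tau_i(w)=-w+O(w^2)$, so for instance $w^{-2}+\tau_i(w)^{-2}=2w^{-2}+O(w^{-1})$ fails to be holomorphic, while already $\left(x\frac{d}{dx}\right)\frac{z^{\eta}}{1-rz^r}$ has a third-order pole at every $p_i$, and higher values of $j$ in \eqref{eq:functions-on-leaves} produce poles of arbitrarily high order. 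So the linear loop equation is a special property of precisely these functions, and the paper supplies the mechanism your sketch is missing: by \cite[Eq.~(4.5)]{SSZ} one has $\frac{z^{\eta'}}{1-rz^r}=\left(x\frac{d}{dx}\right)\frac{z^{\eta'}}{\eta'}$ with $z^{\eta'}/\eta'$ holomorphic, so every function in \eqref{eq:functions-on-leaves} is an iterated $x\frac{d}{dx}$-image of a holomorphic function; differentials of holomorphic functions satisfy the linear loop equation trivially, and the operator $x\frac{d}{dx}$ preserves this property because $x$ is $\tau_i$-invariant (if $f+f\circ\tau_i$ is holomorphic at $p_i$, it is a holomorphic function of $x$, hence so is $x\frac{d}{dx}\left(f+f\circ\tau_i\right)=g+g\circ\tau_i$ for $g=x\frac{df}{dx}$). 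An equivalent repair: the base functions $z^{\eta'}/(1-rz^r)$ have only simple poles, whose symmetrized sums are automatically holomorphic, and this anti-invariance of principal parts is propagated through the powers of $x\frac{d}{dx}$ by the same invariance argument. With this induction inserted, your proof coincides with the paper's.
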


This theorem is proved in \cite{DLN} and \cite{BHLM} using the Johnson-Pandharipande-Tseng formula and the cut-and-join equation for the orbifold Hurwitz numbers. We show below that it is enough to use the quasi-polynomiality property given in Theorem~\ref{ThPol} instead of the Johnson-Pandharipande-Tseng formula. 

Since, except for the first few steps that have to be adjusted, the arguments of~\cite{DLN} and~\cite{BHLM} still work, we refer to these papers for complete computations. Here, after a careful analysis of the consequences of quasi-polynomiality,  we just sketch the main big steps of computation in order to give the reader an idea how the abstract loop equations emerge in this context.

\begin{proof}[Proof of Theorem~\ref{TOPREC}] First of all, we have to check the formulas for $\omega_{0,1}$ and $\omega_{0,2}$. This can be done by direct inspection, see~\cite{BHLM,DLN}.
	
We have the following expression for connected numbers where the sum over $\vec j \in \mathbb{Z}_+^n$ is finite because of quasi-polynomiality:
\begin{equation*}
h^{\circ,[r]}_{g,\mu} = 
\left(2g - 2 + l(\mu) + \frac{|\mu|}{r}\right)!\cdot {\prod_{i=1}^n\dfrac{\mu_i^{\floor*{\frac{\mu_i}{r}}}}{\floor*{\frac{\mu_i}{r}}!}}\, \sum_{\vec{j} \in \Z_+^n} c_{g,n,\vec{j}}^{ \langle \frac{\vec{\mu}}{r} \rangle}  \mu_1^{j_1}\cdots \mu_{n}^{j_n}.
\end{equation*}
Here $c_{g,n,\vec{j}}^{ \langle \frac{\vec{\mu}}{r} \rangle}$ are the coefficients of the polynomial $P_{g,n}^{\langle \frac{\vec{\mu}}{r} \rangle}(\mu_1,\dots,\mu_n)$ in Theorem \ref{ThPol}.
Hence the partition function reads:
$$
H^{ \circ,[r]}_{g,n}(x_1, \ldots, x_n) = 
\sum_{\substack{\vec{\mu}\\ l(\vec\mu)=n}} \sum_{\vec{j} \in \Z_+^n}
c_{g,n,\vec{j}}^{\langle \frac{\vec{\mu}}{r} \rangle}
\br{\prod_{i=1}^n\dfrac{\mu_i^{\floor*{\frac{\mu_i}{r}} + j_i }}{\floor*{\frac{\mu_i}{r}}!}}
x_1^{\mu_1} \cdots x_n^{\mu_n}
$$
Now we apply the Euclidean division to each $\mu_i$ with the notations:
$$
\mu_i = r \floor*{\frac{\mu_i}{r}} + r \Big{\langle} \frac{\mu_i}{r}\Big{\rangle} \qquad \qquad \sigma_i = \floor*{\frac{\mu_i}{r}}, \qquad
\eta_i = r \Big{\langle} \frac{\mu_i}{r} \Big{\rangle}
$$
The coefficients $c_{g,n,\vec{j}}^{\langle \frac{\vec{\mu}}{r} \rangle}$ only depends on the residue of the $\mu_i$
modulo $r$. Writing $[r-1]$ for $\{0,\dots,r-1\}$ we get:
$$
H^{\circ,[r]}_{g,n}(x_1, \ldots, x_n) = \sum_{\vec{\beta} \in [r-1]^n} \sum_{\vec{j} \in \Z_+^n}
c_{g,n,\vec{j}}^{\beta}
\prod_{i=1}^n \sum_{r\sigma_i+\eta_i > 0} \dfrac{{(r\sigma_i + \eta_i)}^{\sigma_i + j_i }}{\sigma_i !}
x_i^{r\sigma_i + \beta_i}
$$

\begin{lem}
The $n$-point functions $H^{\circ, [r]}_{g,n}(x_1,\ldots,x_n)$ are local expansions around $(x_1,\dots,x_n)=(0,\dots,0)$ of rational functions in $(z_1,\ldots,z_n)$, where 
$$
x(z)=ze^{-z^r}.
$$
\end{lem}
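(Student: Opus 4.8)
The plan is to exploit the product structure of the last displayed formula for $H^{\circ,[r]}_{g,n}$. By quasi-polynomiality (Theorem~\ref{ThPol}) the sums over $\vec\beta\in[r-1]^n$ and over $\vec{j}\in\Z_+^n$ are finite, so $H^{\circ,[r]}_{g,n}$ is a finite $\C$-linear combination of products $\prod_{i=1}^n B_{\beta_i,j_i}(x_i)$ of the one-variable series
\[
B_{\beta,j}(x) := \sum_{r\sigma+\beta>0}\frac{(r\sigma+\beta)^{\sigma+j}}{\sigma!}\,x^{r\sigma+\beta},\qquad 0\le\beta\le r-1,\ j\ge 0 .
\]
Since the rational functions in $(z_1,\dots,z_n)$ form a ring, it suffices to show that each single factor $B_{\beta,j}$, viewed as a power series in $x$ and re-expanded through $x=ze^{-z^r}$, is the local expansion at $z=0$ of a rational function of $z$.

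To this end I would introduce the Euler operator $\mathcal{D}=x\frac{d}{dx}$, which acts on $x^m$ by multiplication by $m$. Setting $\xi_\beta(x):=\sum_{r\sigma+\beta>0}\frac{(r\sigma+\beta)^{\sigma-1}}{\sigma!}x^{r\sigma+\beta}$, one checks immediately that $B_{\beta,j}=\mathcal{D}^{\,j+1}\xi_\beta$, so the whole family is generated from the single base series $\xi_\beta$ by applying $\mathcal{D}$. The key computation is then to identify these base series with elementary functions of $z$: Lagrange inversion applied to $z=x\,\phi(z)$ with $\phi(z)=e^{z^r}$ (so $\phi(z)^m=\sum_k m^k z^{rk}/k!$) gives $z^\beta=\beta\,\xi_\beta$ for $1\le\beta\le r-1$, and $z^r=\xi_0$ in the exceptional residue class $\beta=0$, where the constraint $r\sigma+\beta>0$ forces $\sigma\ge1$ and a constant cannot serve as a base. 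Hence $B_{\beta,j}=\tfrac{1}{\beta}\mathcal{D}^{\,j+1}z^\beta$ for $\beta\ne0$ and $B_{0,j}=\mathcal{D}^{\,j+1}z^r$.

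Finally I would rewrite $\mathcal{D}$ intrinsically in the coordinate $z$. From $x=ze^{-z^r}$ one has $\frac{dx}{x}=\frac{1-rz^r}{z}\,dz$, whence
\[
\mathcal{D}=x\frac{d}{dx}=\frac{z}{1-rz^r}\,\frac{d}{dz}.
\]
This is a derivation of $\C(z)$ with rational coefficients, so it maps rational functions of $z$ to rational functions of $z$. Since $z^\beta$ and $z^r$ are polynomials, every $B_{\beta,j}$ is rational in $z$, and therefore so is the finite combination $H^{\circ,[r]}_{g,n}$. As $x(z)=ze^{-z^r}$ satisfies $x'(0)=1\neq0$, the map $z\mapsto x$ is a local biholomorphism near the origin, so the rational function obtained this way has as its local expansion at $(z_1,\dots,z_n)=(0,\dots,0)$ precisely the series~\eqref{orb_gen_func}.

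The main obstacle I expect is the correct identification of the base series $\xi_\beta$ with $z^\beta$ (respectively $z^r$), that is, organizing the Lagrange inversion so that the $-1$ shift in the exponent $\sigma-1$ matches the derivative of $F(z)=z^\beta$, and in particular keeping careful track of the degenerate class $\beta=0$, where the summation starts at $\sigma=1$ and $z^r$ must replace a constant. Once these base cases are pinned down, the action of $\mathcal{D}=\frac{z}{1-rz^r}\frac{d}{dz}$ and the stability of rationality under it are routine.
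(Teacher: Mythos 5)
Your proof is correct, and it shares the paper's skeleton: write $H^{\circ,[r]}_{g,n}$ as a finite linear combination of products of one-variable series, identify a family of base series with explicit rational functions of $z$, and generate everything else with the Euler operator $x\frac{d}{dx}=\frac{z}{1-rz^r}\frac{d}{dz}$, which manifestly preserves rationality in $z$. The genuine difference is where the base identity comes from. The paper outsources it, quoting~\cite[Equation~(46)]{SSZ},
\begin{equation*}
\sum_{\sigma=0}^\infty \frac{(r\sigma+\eta')^{\sigma}}{\sigma!}\,x^{r\sigma+\eta'}=\frac{z^{\eta'}}{1-rz^r},\qquad \eta'=1,\dots,r,
\end{equation*}
and then applies $\left(x\frac{d}{dx}\right)^j$. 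You instead anchor the family one rung lower, at the polynomials $z^\beta/\beta$ (for $1\le\beta\le r-1$) and $z^r$ (for $\beta=0$), and prove those identities yourself by Lagrange inversion for $z=x e^{z^r}$; your computations check out ($z^\beta=\beta\,\xi_\beta$, $z^r=\xi_0$, including the correct handling of the degenerate class $\beta=0$ where the sum starts at $\sigma=1$), and one further application of $\mathcal{D}$ recovers the quoted identity, e.g.\ $\tfrac{1}{\beta}\mathcal{D}z^\beta=\frac{z^\beta}{1-rz^r}$. What your route buys is self-containedness: no external input is needed for the key step. It also has a small structural bonus: your polynomial bases are exactly the functions the paper invokes later (via \cite[Equation~(4.5)]{SSZ}) to prove the linear loop equation, so your normalization serves both lemmas at once. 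Finally, your remark that $x'(0)=1\neq 0$, so that $z\mapsto x$ is a local biholomorphism near the origin, makes the phrase ``local expansion'' precise, a point the paper leaves implicit.
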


\begin{proof}
It is proved in~\cite[Equation (46)]{SSZ} that
\begin{equation}\label{eq:basic-functions-on-leaves}
\sum_{\sigma=0}^\infty \frac{(r\sigma+{\eta'} )^\sigma}{\sigma!}x^{r\sigma+\eta'} = \frac{z^{\eta'}}{1-rz^r},\quad \eta'=1,\dots,r
\end{equation}
(note that here we use $\eta'=1,\dots,r$ instead of $\eta=0,\dots,r-1$ in order to take uniformly the sum over $\sigma\geq 0$ rather than $r\sigma+\eta>0$).
This is obviously a rational function in $z$, as well as 
\begin{equation}\label{eq:functions-on-leaves}
\sum_{\sigma=0}^\infty \frac{(r\sigma+\eta)^{\sigma+j}}{\sigma!}x^{r\sigma+\eta} = \left(x\frac{d}{dx}\right)^j\frac{z^\eta}{1-rz^r}
=\left(\frac{z}{1-rz^r}\frac{d}{dz}\right)^j\frac{z^\eta}{1-rz^r}.
\end{equation}
So, $H^{\circ, [r]}_{g,n}(x_1,\dots,x_n)$ is an expansion of a finite linear combination of products of rational functions in $z_1,\dots,z_n$.
\end{proof}

Let us denote by $p_1,\dots,p_r$ the critical points of the function $x(z)$. It is obvious that each function $z^{\eta'}/(1-rz^r)$ is a linear combination with constant coefficients of the functions $1/(z-z(p_i))$, $i=1,\dots,r$, up to an additive constant (where said additive constant is not of interest to us, since we are dealing with the differentials of these functions). This implies that all functions given by Equation~\eqref{eq:functions-on-leaves} are linear combinations of
$1/(z-z(p_i))^a$, $i=1,\dots,r$, $a\geq 1$. So, we have:

\begin{lem}
	The symmetric differentials $\omega_{g,n}:=(d_1\otimes\cdots\otimes d_n) H^{\circ, [r]}_{g,n}(x_1,\dots,x_n)$ are equal to the sum of their principal parts in the coordinate $z$ at the points $p_1,\dots,p_n$. 
\end{lem}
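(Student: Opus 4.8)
The plan is to establish that each differential $\omega_{g,n}=(d_1\otimes\cdots\otimes d_n)H^{\circ,[r]}_{g,n}$, viewed as a rational differential in the coordinates $z_1,\dots,z_n$, equals the sum of its principal parts at the critical points $p_1,\dots,p_r$ of $x$. By symmetry it suffices to analyze the dependence on a single variable, say $z_1$, fixing the others. The key structural input is already in hand: by the preceding lemma and Equation~\eqref{eq:functions-on-leaves}, $H^{\circ,[r]}_{g,n}$ is a finite linear combination of products of rational functions, and the observation recorded just before the statement shows that each such factor in the variable $z_1$ is a linear combination of $1/(z_1-z(p_i))^a$ with $i=1,\dots,r$ and $a\geq 1$, up to an additive constant. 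Differentiating by $d_1=\frac{\partial}{\partial x_1}dx_1$ kills the additive constant, so $\omega_{g,n}$ as a function of $z_1$ is a rational differential whose only poles lie at the points $z_1=z(p_i)$.

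First I would make precise what ``sum of principal parts'' means here by invoking the definition $[\eta(z_1)]_\alpha=\res_{z=\alpha}\eta(z)\int^z_\alpha B(\cdot,z_1)$ from Equation~\eqref{eq:principle-parts}, with $B=dz_1\,dz_2/(z_1-z_2)^2$ the Bergman kernel on $\C\mathbb{P}^1$. The content to verify is the standard fact that for a rational one-form on $\C\mathbb{P}^1$ whose only finite poles are at the $z(p_i)$ and which is regular (in fact decaying) at $z_1=\infty$, the Cauchy-type projection $[\cdot]_{p_i}$ reproduces exactly the principal part at $p_i$, and that summing over all poles recovers the whole form. Concretely, expanding $B$ near each $p_i$ shows that $[\,\omega_{g,n}\,]_{p_i}$ extracts the polar part $\sum_{a\geq 1}c_a/(z_1-z(p_i))^a\,dz_1$ of $\omega_{g,n}$ at $p_i$, so the equality of $\omega_{g,n}$ with $\sum_{i=1}^r[\omega_{g,n}]_{p_i}$ is equivalent to the vanishing of the holomorphic part of $\omega_{g,n}$ on $\C\mathbb{P}^1$.

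The crux, therefore, is to check there is no residual holomorphic contribution, which amounts to controlling the behavior at $z_1=\infty$ and confirming the absence of a genuinely polynomial (non-polar) piece. Since each factor $1/(z_1-z(p_i))^a$ vanishes as $z_1\to\infty$ and differentiation only improves the decay, $\omega_{g,n}$ has no pole at infinity and no holomorphic additive term survives; a rational differential on $\C\mathbb{P}^1$ with prescribed principal parts and no pole at infinity is uniquely the sum of those principal parts. I would also note that the critical points $p_1,\dots,p_r$ are exactly the zeros of $dx$: from $x(z)=ze^{-z^r}$ one computes $dx=(1-rz^r)e^{-z^r}\,dz$, so $z(p_i)^r=1/r$, matching the poles $1-rz^r=0$ appearing in Equation~\eqref{eq:functions-on-leaves}. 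This identifies the poles of $\omega_{g,n}$ with the zeros of $dx$, as required by the topological recursion framework of Section~\ref{EOTR}.

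The main obstacle I anticipate is purely bookkeeping rather than conceptual: one must ensure that the partial-fraction decomposition of $\frac{z^\eta}{1-rz^r}$ and of its images under $(z/(1-rz^r))\,d/dz$ never produces a polynomial-in-$z_1$ remainder that would escape the principal-part projection, and that the degree of the pole stays finite (which it does, since each application of the operator $\left(\frac{z}{1-rz^r}\frac{d}{dz}\right)$ raises the pole order by a bounded amount and the sum over $\vec j$ is finite by quasi-polynomiality, Theorem~\ref{ThPol}). Once the decay at infinity and the finiteness of the pole orders are in place, the identity $\omega_{g,n}=\sum_{i=1}^r[\omega_{g,n}]_{p_i}$ follows immediately, and this is precisely property~\eqref{eq:principle-parts} needed to feed into the verification of the abstract loop equations.
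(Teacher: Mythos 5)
Your proposal is correct and follows essentially the same route as the paper: partial-fraction decomposition of $z^{\eta'}/(1-rz^r)$ into the functions $1/(z-z(p_i))$ up to additive constants, stability of this form under the operator $\frac{z}{1-rz^r}\frac{d}{dz}$, the killing of constants by differentiation, and the absence of a pole at infinity, which together force $\omega_{g,n}$ to equal the sum of its principal parts at the zeros of $dx$. The paper leaves most of this implicit in the paragraph preceding the lemma, whereas you spell out the verification at infinity and the identification $z(p_i)^r=1/r$ explicitly; this is a difference of detail, not of method.
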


This Lemma immediately implies Equation~\eqref{eq:principle-parts} for the standard Cauchy kernel in the coordinate $z$ given by $B(z_1,z_2)=dz_1dz_2/(z_1-z_2)^2$. 

 

\begin{lem} The differentials $\omega_{g,n}(z_1,\dots,z_n)$ satisfy the linear loop equation~\eqref{eq:linear-loop-equation}, namely, $\omega_{g,n}(z_1,\dots,z_n)+\omega_{g,n}(\tau_iz_1,z_2,\dots,z_n)$ is holomorphic for $z_1\to p_i$, where by $\tau_i$ we denote the deck transformation of function $x$ near the point $p_i$, $i=1,\dots,r$.
\end{lem}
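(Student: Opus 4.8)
The plan is to prove the linear loop equation by exploiting the structure of the correlation differentials established in the previous two lemmas, namely that each $\omega_{g,n}$ is (in the $z$-coordinate) a linear combination of products of functions of the form $1/(z_i - z(p_k))^a$ and their differentials. Concretely, $\omega_{g,n}(z_1,\dots,z_n)$ as a $1$-form in $z_1$ (with $z_2,\dots,z_n$ held fixed) is built from the functions appearing in Equation~\eqref{eq:functions-on-leaves}, so near $z_1 \to p_i$ the only possible singular behavior comes from the factor $1/(1-rz_1^r)$ and its $z_1$-derivatives. The key point is that the linear loop equation concerns the symmetrized combination $\omega_{g,n}(z_1,z_S) + \omega_{g,n}(\tau_i z_1, z_S)$, where $\tau_i$ is the deck transformation of $x$ at $p_i$; I expect the singular parts to cancel under this symmetrization.

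The main steps I would carry out are as follows. First, I would reduce the claim to a statement about the basic building blocks: since $\omega_{g,n}$ is a finite linear combination of products $\prod_{i=1}^n d_i f_{\eta_i, j_i}(z_i)$ where $f_{\eta,j}(z) = \left(\frac{z}{1-rz^r}\frac{d}{dz}\right)^j \frac{z^\eta}{1-rz^r}$, and only the $z_1$-dependence matters near $p_i$, it suffices to analyze a single such factor $d_1 f_{\eta,j}(z_1)$ under $z_1 \mapsto \tau_i z_1$. Second, I would establish the local behavior of the involution $\tau_i$. Writing the local coordinate $w$ centered at $p_i$ so that $\tau_i$ acts as $w \mapsto -w + O(w^2)$ (since $dx$ has a simple zero, $x - x(p_i) \sim c\, w^2$ and the deck transformation is $w \mapsto -w$ to leading order), I would track how the odd-order versus even-order parts of the Laurent expansion of $f_{\eta,j}$ in $w$ transform. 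Third, I would show that the genuinely singular contributions are odd in $w$ to leading order and hence change sign under $\tau_i$, so that $f_{\eta,j}(z_1) + f_{\eta,j}(\tau_i z_1)$ has no pole as $z_1 \to p_i$ — equivalently, its differential is holomorphic there.

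The cleanest route to step three is to observe that $\omega_{g,n}$ is by construction the sum of its principal parts (by the preceding Lemma), and that these principal parts, taken in the coordinate $z$, have a definite parity under $\tau_i$. The principal part $[\omega_{g,n}]_{p_i}$ is, up to a holomorphic remainder, an odd function of the local coordinate $w$ at $p_i$ because the projection via the Bergman kernel $B = dz_1\,dz_2/(z_1-z_2)^2$ produces only odd powers of $w$ in the singular part relative to the local parametrization in which $\tau_i$ is $w \mapsto -w$. Adding the $\tau_i$-image then cancels these odd terms and leaves a holomorphic $1$-form, which is exactly Equation~\eqref{eq:linear-loop-equation}. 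I would follow the detailed computations of~\cite{DLN,BHLM} here, since, as the excerpt notes, only the first few steps need adjustment and the rest of their argument applies verbatim once the rationality and principal-part structure have been secured.

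The main obstacle I anticipate is step two: pinning down the parity of $f_{\eta,j}$ under $\tau_i$ precisely enough. The functions $z^\eta/(1-rz^r)$ and their $j$-fold application of $(z/(1-rz^r))\,d/dz$ do not have any obvious global symmetry under the local involution, so the parity statement must be extracted from the fact that $x(z) = z e^{-z^r}$ is $\tau_i$-invariant and that the pole is located exactly at the branch point $z = z(p_i)$, where $1 - r z^r$ vanishes. The delicate part is to verify that after expanding everything in the branch coordinate $w$, the leading singular term is odd; this requires carefully relating the derivative operator $(z/(1-rz^r))\,d/dz$ to $d/dw$ near $p_i$ and checking that the transformation $z \mapsto \tau_i z$ reverses the sign of the principal part rather than merely permuting poles at different $p_k$. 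Once this local parity is established, the cancellation — and hence holomorphicity of the symmetrized differential — follows immediately.
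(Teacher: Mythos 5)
Your first step --- reducing to the differentials of the building blocks $f_{\eta,j}(z)=\left(\frac{z}{1-rz^r}\frac{d}{dz}\right)^{j}\frac{z^\eta}{1-rz^r}$ of Equation~\eqref{eq:functions-on-leaves} --- is exactly how the paper's proof begins. But the core of your argument has a genuine gap: you claim the polar part of $\omega_{g,n}$ at $p_i$ is odd under $\tau_i$ \emph{because} the principal-part projection via $B=dz_1\,dz_2/(z_1-z_2)^2$ ``produces only odd powers of $w$''. This is circular: oddness of the polar part under the deck transformation \emph{is} the linear loop equation~\eqref{eq:linear-loop-equation}, and the principal-part decomposition imposes no parity constraint at all. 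Concretely, in a local coordinate $w$ at $p_i$ in which $\tau_i(w)=-w$, the form $d(w^{-2})$ is equal to the sum of its principal parts, yet it is \emph{even} under $\tau_i$, so $d(w^{-2})+\tau_i^{*}d(w^{-2})=2\,d(w^{-2})$ still has a pole and the linear loop equation fails for it. Since the functions $f_{\eta,j}$ with $j\geq 1$ have poles of order $\geq 2$ at $z(p_i)$, no pole-by-pole parity argument can succeed; the oddness holds only for the particular combinations of poles that the operator $x\frac{d}{dx}$ produces, and establishing this is precisely the content of the lemma. Your closing paragraph correctly identifies this as the ``delicate part'', but then defers it to the computations of~\cite{DLN} and~\cite{BHLM}, which is exactly the step this lemma is meant to supply independently.

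The paper closes this gap with a short algebraic induction that avoids local expansions entirely, and you may want to compare it with your plan. Two facts are used. First, the operator $x\frac{d}{dx}$ \emph{preserves} the linear loop equation: since $x\circ\tau_i=x$, one has $\left(x\frac{d}{dx}f\right)\circ\tau_i=x\frac{d}{dx}\left(f\circ\tau_i\right)$; if $d\left(f+f\circ\tau_i\right)$ is holomorphic at $p_i$, then $f+f\circ\tau_i$ is a $\tau_i$-invariant holomorphic function, hence a holomorphic function of $x$, hence so is $x\frac{d}{dx}\left(f+f\circ\tau_i\right)$, and its differential is again holomorphic. Second, by \cite[Equation~(4.5)]{SSZ}, $\frac{z^{\eta'}}{1-rz^r}=\left(x\frac{d}{dx}\right)\frac{z^{\eta'}}{\eta'}$, so every $f_{\eta,j}$ is $\left(x\frac{d}{dx}\right)^{j+1}$ applied to the \emph{holomorphic} function $z^{\eta'}/\eta'$, for which the symmetrized differential is trivially holomorphic. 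This operator identity is exactly what encodes the parity you were trying to extract from local expansions; as written, your proposal asserts that parity but does not prove it.
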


\begin{proof} It is sufficient to proof this lemma for the differentials of the functions given by Equation~\eqref{eq:functions-on-leaves}. Observe that the operator $x\frac{d}{dx}$ preserves this property, namely, if $df(z)+df(\tau_iz)$ is holomorphic for $z\to p_i$, then $d(x\frac{d}{dx})f(z)+d(x\frac{d}{dx})f(\tau_iz)$ is also holomorphic for $z\to p_i$. It is proved in~\cite[Equation (4.5)]{SSZ} that 
$$
\frac{z^{\eta'}}{1-rz^r} = \left(x\frac{d}{dx}\right) \frac{z^{\eta'}}{\eta'}, \quad \eta'=1,\dots,r.
$$
The functions $z^{\eta'}$ are holomorphic, so their differentials satisfy the linear loop equation. Therefore, the differentials of all the functions given by Equation~\eqref{eq:functions-on-leaves} satisfy this property as well.
\end{proof}

Now we have to explain how we derive the quadratic loop equation~\eqref{eq:quadratic-loop-equation}. The cut-and-join equation for double Hurwitz numbers~\cite{GJV} (see also \cite{DLN}) can be written in the following form:
\begin{align}
0 &= -(2g-2+n)H^{\circ,[r]}_{g,n}(x_{[n]})-\sum_{i=1}^n (x_i\frac{d}{dx_i})H^{\circ,[r]}_{g,n}(x_{[n]}) \\ \notag
& +\frac 12 \sum_{i\not=j} \left[\frac{x_i}{x_j-x_i} (x_j\frac{d}{dx_j}) 
H^{\circ,[r]}_{g,n-1}(x_{[n]\setminus\{i\}})
+ 
\frac{x_j}{x_i-x_j} (x_i\frac{d}{dx_i}) 
H^{\circ,[r]}_{g,n-1}(x_{[n]\setminus\{j\}})
\right]
\\ \notag 
& +\frac{1}{2}\sum_{i=1}^n \left[(x'\frac{d}{dx'})(x''\frac{d}{dx''})
H^{\circ,[r]}_{g-1,n+1}(x', x'', x_{[n]\setminus \{i\}}) \right]_{\substack{x'=x''=x_i}} 
\\ \notag
& + \frac{1}{2} \sum_{i=1}^n \mathop{\sum_{g_1+g_2=g}}_{I\sqcup J=[n] \setminus \{i\}} \left[ (x_i\frac{d}{d x_i}) H^{\circ,[r]}_{g_1,|I|+1}(x_i,x_I) \right] \left[ (x_i\frac{d}{d x_i})H^{\circ,[r]}_{g_2,|J|+1}(x_i,x_J) \right].
\end{align}
Consider the symmetrization of this expression in variable $x_1$ with respect to the deck transformation near the point $p_i$.  Apply further the operator $\prod_{j=2}^n (\frac{d}{dx_j})$ to it and cancel the terms that do not contribute to the polar part of this expression at $z(x_1)\to p_i$. The obstruction for the derived expression to be holomorphic at $p_i$ is precisely the quadratic loop equation~\eqref{eq:quadratic-loop-equation}. 

This computation implicitly contained in~\cite{DLN} and~\cite{BHLM} as the first step of the derivation of the topological recursion, see also~\cite{DKOSS} for a special case of that. We refer here also to~\cite[Section 2.4]{BS}, where it is shown how to derive the topological recursion from the abstract loop equations in general situation, where one can easily recognize the general pattern of the argument in~\cite{DLN} for this particular case. 
\end{proof}

\section{Johnson-Pandharipande-Tseng formula}
In this section we give a new proof of a special case of the Johnson-Pand\-ha\-ri\-pan\-de-Tseng formula for orbifold Hurwitz numbers. This is a simple corollary of Theorem~\ref{TOPREC}, and the results obtained in~\cite{LPSZ}.

We consider the space $\overline{\mathcal{M}}_{g,-\vec\mu}(B\mathbb{Z}_r)$ of stable maps to the classifying space $B\mathbb{Z}_r$ of the cyclic group of order $r$, where the vector $-\vec\mu$ in the notation corresponds to the prescribing the monodromy data  $ (-\mu_1\mod r,\dots,-\mu_n\mod r)$ at the marked points of source curves. One can think about the elements of this space as admissible covers of curves in $\overline{\mathcal{M}}_{g,n}$ with given monodromy at the marked points. Denote by $p$ the forgetful map $\overline{\mathcal{M}}_{g,-\vec\mu}(B\mathbb{Z}_r)\to \overline{\mathcal{M}}$. 

Consider the action of $\mathbb{Z}_r$ on the $H^0(C,\omega_C)$, where $C$ is the covering curve. Consider its irreducible component that corresponds to the character $U\colon \mathbb{Z}_r\to\mathbb{C}^*$ that send a generator to $\exp(2\pi i/r)$. This component gives us a vector bundle over $\overline{\mathcal{M}}_{g,-\vec\mu}(B\mathbb{Z}_r)$, whose Chern classes we denote by $\lambda_i$, $i\geq 0$. We denote by $S(\vec{\mu})$ the class
\begin{equation}
S(\left<\vec\mu/r\right>):=r^{1-g+\sum \left < \frac{\mu_i}{r} \right >}
p_*\sum_{i\geq 0} (-r)^i \lambda_i.
\end{equation}

\begin{teo} \label{thm:JPT} We have:
\begin{align} \label{eq:particular-jpt}
 \frac{ h^{\circ,[r]}_{g;\vec{\mu}} } {b!} = 
  \prod_{i=1}^{l(\vec{\mu})}\frac{\mu_i^{\floor*{\frac{\mu_i}{r}}}}{\floor{\frac{\mu_i}{r}}!}
  \int_{\overline{\mathcal{M}}_{g,l(\vec{\mu})}}
  \frac{S(\left<\vec\mu/r\right>)}{\prod_{j=1}^{l(\vec{\mu})} (1 - \mu_j {\psi}_j)}
\end{align}
\end{teo}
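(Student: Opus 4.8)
The plan is to derive the Johnson-Pandharipande-Tseng formula by identifying the intersection-theoretic expression on the right-hand side of~\eqref{eq:particular-jpt} with the correlation differentials $\omega_{g,n}$ of the spectral curve~\eqref{CURVE}, which we already control via Theorem~\ref{TOPREC}. The logical chain is: Theorem~\ref{TOPREC} tells us that the $n$-point generating functions $H^{\circ,[r]}_{g,n}$ are expansions of the topological-recursion differentials for the curve $x(z)=ze^{-z^r}$, $y(z)=z^r$; and the main result of~\cite{LPSZ} expresses the topological-recursion differentials of such a curve directly as an ELSV-type intersection formula on $\oM_{g,n}$. So the proof is essentially a matter of applying the theorem of~\cite{LPSZ} to our specific spectral curve and checking that the tautological class and the auxiliary data it produces match $S(\langle\vec\mu/r\rangle)$ and the $\psi$-class denominators appearing in~\eqref{eq:particular-jpt}.

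Concretely, I would first recall the precise statement of the main theorem of~\cite{LPSZ}, which takes a spectral curve with $r$ simple ramification points and outputs $\omega_{g,n}$ in the universal form
\begin{equation*}
\sum_{a_1,\dots,a_n}\left[\int_{\oM_{g,n}}\frac{S(a_1,\dots,a_n)}{\prod_{j=1}^n(1-\psi_j\tfrac{d}{dx_j})}\right]\prod_{j=1}^n\xi_{a_j}(x_j),
\end{equation*}
with $S$ and the $\xi_a$ explicitly determined by the local data of $x$ and $y$ at the branch points. The second step is to compute, for the curve $x(z)=ze^{-z^r}$, $y(z)=z^r$, the local expansions at each of the $r$ critical points $p_1,\dots,p_r$ of $x$, and thereby read off the class $S$. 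Here I expect the class produced by~\cite{LPSZ} to be precisely the pushforward $r^{1-g+\sum\langle\mu_i/r\rangle}p_*\sum_i(-r)^i\lambda_i$ of the Chern classes of the $U$-eigenbundle on $\oM_{g,-\vec\mu}(B\Z_r)$; this identification is where the Hodge-integral content of the orbifold problem enters, via the Chiodo-type formula relating these $\lambda_i$ to the spectral curve. The third step is to extract the coefficient of $x_1^{\mu_1}\cdots x_n^{\mu_n}$ from both sides: on the analytic side this is $h^{\circ,[r]}_{g;\vec\mu}/b!$ by~\eqref{orb_gen_func} and~\eqref{eq:orb-gen-form}, while on the intersection side the operator $1/(1-\psi_j\tfrac{d}{dx_j})$ together with the $\xi_{a_j}$ produces, after this coefficient extraction, exactly the factor $\prod_i\mu_i^{\lfloor\mu_i/r\rfloor}/\lfloor\mu_i/r\rfloor!$ and the denominator $\prod_j(1-\mu_j\psi_j)$.

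The main obstacle will be the bookkeeping in the second and third steps: matching the normalization constants and the residue structure of $\xi_a$ at the points $p_i$ with the combinatorial prefactors in~\eqref{eq:orb-gen-form}, and in particular confirming that the Chern classes $\lambda_i$ of the character-$U$ eigenbundle are the ones delivered by the local analysis of $y\,dx$ at the branch points, so that the abstract tautological class $S$ of~\cite{LPSZ} coincides with $S(\langle\vec\mu/r\rangle)$. Since~\cite{LPSZ} was proved precisely for curves of this shape and already establishes the equivalence (JPT formula)~$\Leftrightarrow$~(Topological Recursion), I expect the identification to go through once the dictionary between the two normalizations is fixed; the role of Theorem~\ref{TOPREC} is simply to supply the missing implication that our $\omega_{g,n}$ genuinely satisfy the topological recursion, which is the input~\cite{LPSZ} requires. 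I would therefore organize the proof as: invoke Theorem~\ref{TOPREC} to get topological recursion, invoke~\cite{LPSZ} to convert it to the universal ELSV form, and then perform the explicit comparison of the local spectral data with the orbifold Hodge class to recognize~\eqref{eq:particular-jpt}.
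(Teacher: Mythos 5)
Your proposal is correct and follows essentially the same route as the paper: invoke Theorem~\ref{TOPREC} to know that the $n$-point functions $H^{\circ,[r]}_{g,n}$ are expansions of the correlation differentials for the curve $x=ze^{-z^r}$, $y=z^r$, invoke~\cite{LPSZ} for the intersection-theoretic form of those same differentials, and match coefficients of $x_1^{\mu_1}\cdots x_n^{\mu_n}$. The only difference is one of packaging: the explicit local analysis at the branch points and the Chiodo-type identification of the class $S(\langle\vec\mu/r\rangle)$ that you plan to carry out is precisely the content of~\cite{LPSZ}, which the paper simply cites as a black box, so no additional work of that kind is actually needed.
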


\begin{rmk}
	This is a special case of the Johnson-Pandharipande-Tseng formula proved in~\cite{JPT}, see also~\cite{J09,DLN,BHLM} for further explanation of the class $S(\vec{\mu})$ used in it. 
\end{rmk}

\begin{rmk}
	Note that this formula looks exactly as formula~\eqref{eq:orbifold-hurwitz-poly}, but now the coefficients of the polynomial $P_{g,n}^{\langle \frac{\vec{\mu}}{r} \rangle}(\mu_1,\dots,\mu_n)$ are
explicitly represented as intersection numbers.
\end{rmk}

We give here a new proof of Theorem~\ref{thm:JPT}.

\begin{proof} The proof consists of two simple observations. On the one hand, Theorem~\ref{TOPREC} says that the expressions 
$$
d_1\otimes\cdots\otimes d_{n}	\sum_{l(\vec{\mu})=n} \frac{ h^{\circ,[r]}_{g;\vec{\mu}} } {b!}x_1^{\mu_1}\cdots x_n^{\mu_n}
$$
are expansions of the symmetric differentials  $\omega_{g,n}(z_1,\dots,z_n)$ that satisfy the topological recursion for the spectral curve data $(\C\mathbb{P}^1, x=ze^{-z^r}, y=z^r)$. On the other hand, it is proved in~\cite{LPSZ} that the expansion of the correlation differentials for this spectral curve is given by 
$$
d_1\otimes\cdots\otimes d_{n} \sum_{l(\vec{\mu})=n}
 \int_{\overline{\mathcal{M}}_{g,n}}
 \frac{S(\left<\vec\mu/r\right>)}{\prod_{j=1}^{n} (1 - \mu_j {\psi}_j)}
  \prod_{i=1}^{n}\frac{\mu_i^{\floor*{\frac{\mu_i}{r}}}}{\floor{\frac{\mu_i}{r}}!}x^{\mu_i}
$$
This identifies the left hand side and the right hand side of Equation~\eqref{eq:particular-jpt}.
\end{proof}

\end{document}